\setlist[itemize]{noitemsep,topsep=3pt,parsep=3pt,partopsep=3pt}
\setlist[enumerate]{noitemsep,topsep=3pt,parsep=3pt,partopsep=3pt}
\definecolor{darkgreen}{rgb}{0,0.5,0}
\crefname{theorem}{Theorem}{Theorems}
\Crefname{lemma}{Lemma}{Lemmas}
\Crefname{claim}{Claim}{Claims}
\Crefname{observation}{Observation}{Observations}
\Crefname{enumi}{Line}{Lines}
\newtheorem{theorem}{Theorem}[section]
\newtheorem{lemma}[theorem]{Lemma}
\newtheorem{meta-theorem}[theorem]{Meta-Theorem}
\newtheorem{corollary}[theorem]{Corollary}
\newtheorem{observation}[theorem]{Observation}
\newtheorem{definition}[theorem]{Definition}
\algnewcommand\algorithmicswitch{\textbf{switch}}
\algnewcommand\algorithmiccase{\textbf{case}}
\newcommand{\eps}{\varepsilon}
\newcommand{\local}{$\mathsf{LOCAL}$\xspace}
\newcommand{\congestClique}{$\mathsf{CONGESTED}$-$\mathsf{CLIQUE}$\xspace}
\newcommand{\CONGESTED}{\congestClique}
\newcommand{\poly}{\operatorname{\text{{\rm poly}}}}
\newcommand{\polylog}{\operatorname{\text{{\rm polylog}}}}
\newcommand{\abs}[1]{\left| #1 \right|}
\newcommand{\prob}[1]{P(#1)}
\newcommand{\Prob}[1]{P\left(#1\right)}
\newcommand{\E}[1]{\mathbb{E}\left[{#1}\right]}
\newcommand{\NAlocal}{N_A^{\rm{local}}}
\newcommand{\Nlocal}{N^{\rm{local}}}
\newcommand{\Blocal}{B^{\rm{local}}}
\newcommand{\NAideal}{N_A^{\rm{central}}}
\newcommand{\machines}{m}
\newcommand{\Mopt}{M^{\star}}
\newcommand{\tC}{\tilde{C}}
\newcommand{\tx}{\tilde{x}}
\newcommand{\tClate}{\tC^{late}}
\newcommand{\cT}{\mathcal{T}}
\newcommand{\cA}{\mathcal{A}}
\newcommand{\hatt}{\hat{t}}
\newcommand{\ty}{\tilde{y}}
\newcommand{\tO}{\tilde{O}}
\newcommand{\yold}{y^{old}}
\newcommand{\ylocal}{\ty}
\newcommand{\yideal}{y}
\newcommand{\yMPC}{y^{MPC}}
\newcommand{\xMPC}{x^{MPC}}
\newcommand{\nlocal}{n^{local}}
\newcommand{\nlate}{n^{late}}
\newcommand{\istar}{i^{\star}}
\newcommand{\tstar}{t^{\star}}
\newcommand{\rb}[1]{\left( #1 \right)}
\newcommand{\spec}{\star}
\newcommand{\different}[1]{{\color{blue} #1}}
\newcommand{\eqdef}{\stackrel{\text{\tiny\rm def}}{=}}
\newcommand{\Local}{\textsc{Central}\xspace}
\newcommand{\LocalRand}{\textsc{Central-Rand}\xspace}
\newcommand{\MPCsimul}{\textsc{MPC-Simulation}\xspace}
\DeclareMathOperator{\diff}{diff}
\DeclareMathOperator{\difflocal}{diff^{local}}
\renewcommand{\paragraph}[1]{\vspace{0.15cm}\noindent {\bf #1}:}
\newcommand{\FullOrShort}{full}
  \newcommand{\fullOnly}[1]{#1}
  \newcommand{\shortOnly}[1]{}
    \newcommand{\fullOnly}[1]{}
    \newcommand{\IncludePictures}[1]{}
\begin{document}

\date{}

\title{Improved Massively Parallel Computation Algorithms for \\ MIS, Matching, and Vertex Cover}

\author{
 Mohsen Ghaffari\\
  \small ETH Zurich \\
  \small ghaffari@inf.ethz.ch
\and
Themis Gouleakis \\
\small MIT \\
\small tgoule@mit.edu
\and
Christian Konrad \\
\small University of Bristol \\
\small christian.konrad@bristol.ac.uk
\and
Slobodan Mitrovi{\'c} \\
\small EPFL \\
\small slobodan.mitrovic@epfl.ch
\and
Ronitt Rubinfeld \\
\small MIT and Tel Aviv University \\
\small ronitt@csail.mit.edu
}

\maketitle

\setcounter{page}{0}
\thispagestyle{empty}

\begin{abstract}
We present $O(\log\log n)$-round algorithms in the Massively Parallel Computation (MPC) model, with $\tilde{O}(n)$ memory per machine, that compute a maximal independent set, a $1+\eps$ approximation of maximum matching, and a $2+\eps$ approximation of minimum vertex cover, for any $n$-vertex graph and any constant $\eps>0$. These improve the state of the art as follows: 
\begin{itemize}
\item Our MIS algorithm leads to a simple $O(\log\log \Delta)$-round MIS algorithm in the \congestClique model of distributed computing, which improves on the $\tilde{O}(\sqrt{\log \Delta})$-round algorithm of Ghaffari [PODC'17]. 
\item Our $O(\log\log n)$-round $(1+\eps)$-approximate maximum matching algorithm simplifies or improves on the following prior work: $O(\log^2\log n)$-round $(1+\eps)$-approximation algorithm of Czumaj et al. [STOC'18] and $O(\log\log n)$-round $(1+\eps)$-approximation algorithm of Assadi et al. [SODA'19]. 
\item Our $O(\log\log n)$-round $(2+\eps)$-approximate minimum vertex cover algorithm improves on an $O(\log\log n)$-round $O(1)$-approximation of Assadi et al. [arXiv'17]. 
\end{itemize} 

\end{abstract}
\newpage

\section{Introduction}
	A growing need to process massive data led to development of a number of frameworks for large-scale computation, such as MapReduce~\cite{dg04}, Hadoop~\cite{White:2012}, Spark~\cite{ZahariaCFSS10}, or Dryad~\cite{Isard:2007}. Thanks to their natural approach to processing massive data, these frameworks have gained great popularity. In this work, we consider the \emph{Massively Parallel Computation} (MPC) model~\cite{KarloffSV10} that is abstracted out of the capabilities of these frameworks.

	In our work, we study some of the most fundamental problems in algorithmic graph theory: maximal independent set (MIS), maximum matching and minimum vertex cover. The study of these problems in the models of parallel computation dates back to PRAM algorithm. A seminal work of Luby~\cite{Luby86} gives a simple randomized algorithm for constructing MIS in $O(\log{n})$ PRAM rounds. When this algorithm is applied to the line graph of input graph $G$, it outputs a maximal matching of $G$, and hence a $2$-approximate maximum matching of $G$. The output maximal matching also provides a $2$-approximate minimum vertex cover. Similar results, also in the context of PRAM algorithms, were obtained in~\cite{AlonBI86, II86, IsraeliS86}. Since then, the aforementioned problems were studied quite extensively in various models of computation. In the context of MPC, we design simple randomized algorithms that construct (approximate) instances for all the three problems.
	
\subsection{The Models}
We consider two closely related models: \emph{Massively Parallel Computation} (MPC), and the \congestClique model of distributed computing. Indeed, we consider it as a conceptual contribution of this paper to (further) exhibit the proximity of these two models. We next review these models.
	
		\subsubsection{The MPC model}
		
		The MPC model was first introduced in~\cite{KarloffSV10} and later refined in~\cite{goodrich2011sorting, BeameKS13, AndoniNOY14}. The computation in this model proceeds in synchronous \emph{rounds} carried out by $\machines$ machines. At the beginning of every round, the data (e.g. vertices and edges) is distributed across the machines. During a round, each machine performs computation locally without communicating to other machines. At the end of the round, the machines exchange messages which are used to guide the computation in the next round. In every round, each machine receives and outputs messages that fit into its local memory.
		
	\paragraph{Space} In this model, each machine has $S$ words of space.
	If $N$ is the total size of the data and each machine has $S$ words of space, the typical settings that are of
interest are when $S$ is sublinear in $N$ and $S \cdot m = \Theta(N)$. That is, the total memory across all the machines suffices to fit all the data, but is not much larger than that. If we are given a graph on $n$ vertices, in our work we consider the regimes in which $S \in \Theta(n / \polylog{n})$ or $S \in \Theta(n)$.
	
	\paragraph{Communication vs. computational complexity}
	Our main focus is the number of rounds required to finish the computation, which is essentially the complexity of the communication needed to solve the problem. Although we do not explicitly state the computational complexity in our results, it will be apparent from the description of our algorithms that the total computation time across all the machines is nearly-linear in the input size.
		
	\subsubsection{\congestClique}
	A second model that we consider is the \congestClique model of distributed computing, which was introduced by Lotker, Pavlov, Patt-Shamir, and Peleg\cite{lotker2003mst} and has been studied extensively since then, see e.g.,\cite{Patt-Shamir2011sorting, dolev2012tri, berns2012super, 
  lenzen2013route, Drucker:congestedK, Danupon-paths, hegeman2014near,
  hegeman2014lessons, Censor-Hillel:Algebraic, Hegeman:MST,
  Becker:2015, Ghaffari-MIS, GhaffariMSTLogStar,
  korhonen2016deterministic, henzinger2016deterministic, censor2017derandomizing, ghaffari2017CCMIS, jurdzinski2018mst}. In this model, we have $n$ players which can communicate in synchronous rounds. In each round, every player can send $O(\log n)$ bits to every other player. Besides this communication restriction, the model does not limit the players, e.g., they can use large space and arbitrary computations; though, in our algorithms, both of these will be small. Furthermore, in studying graph problems in this model, the standard setting is that we have an $n$-vertex graph $G=(V, E)$, and each player is associated with one vertex of this graph. Initially, each player knows only the edges incident on its own vertex. At the end, each player should know the part of the output related to its own vertex, e.g., whether its vertex is in the computed maximal independent set or not, or whether some of its edges is in the matching or not.
	
	We emphasize that \congestClique provides an all-to-all communication model. It is worth contrasting this with the more classical models of distributed computing. For instance, the \local model, first introduced by Linial~\cite{linial1987LOCAL}, allows the players to communicate only along the edges of the graph problem $G$ (with unbounded size messages).
	
	\subsection{Related Work}
		\paragraph{Maximum Matching and Minimum Vertex Cover}
			 If the space per machine is $O(n^{1 + \delta})$, for any $\delta > 0$, Lattanzi et al.~\cite{LattanziMSV11} show how to construct a maximal matching, and hence a $2$-approximate minimum vertex cover, in $O(1 / \delta)$ MPC rounds. Furthermore, in case the machine-space is $\Theta(n)$, their algorithm requires $O(\log{n})$ many rounds to output a maximal matching. In their work, they apply \emph{filtering} techniques to gradually sparsify the graph. Ahn and Guha~\cite{AhnG15} provide a method for constructing a $(1 + \eps)$-approximation of weighted maximum matching in $O(1 / (\delta \eps))$ rounds while, similarly to~\cite{LattanziMSV11}, requiring that the space per machine is $O(n^{1 + \delta})$.
			
			If the space per machine is $\tO(n \sqrt{n})$, Assadi and Khanna~\cite{AssadiK17} show how construct an $O(1)$-approximate maximum matching and an $O(\log{n})$-approximate minimum vertex cover in two rounds. Their approach is based on designing randomized composable coresets.
			
			Recently, Czumaj et al.~\cite{czumaj2017round} designed an algorithm for constructing a $(1+\eps)$-approximate maximum matching in $O((\log \log{n})^2)$ MPC rounds of computation and $O(n / \polylog{n})$ memory per machine. To obtain this result, they start from a variant of a PRAM algorithm that requires $O(\log{n})$ parallel iterations, and showed how to compress many of those iterations (on average, $O(\log{n} / (\log \log{n})^2)$ many of them) into $O(1)$ MPC rounds. Their result does not transfer to an algorithm for computing $O(1)$-approximate minimum vertex cover.
			
			Building on~\cite{czumaj2017round} and~\cite{AssadiK17}, Assadi~\cite{Assadi17} shows how to produce an $O(\log{n})$-approximate minimum vertex cover in $O(\log \log{n})$ MPC rounds when the space per machine is $O(n / \polylog{n})$. The work by Assadi et al.~\cite{ABBMS17} also addresses these two problems, and provides a way to construct a $(1 + \eps)$-approximate maximum matching and an $O(1)$-approximate minimum vertex cover in $O(\log \log{n})$ rounds when the space per machine is $\tO(n)$. Their result builds on techniques originally developed in the context of dynamic matching algorithms and composable coresets.

		\paragraph{Maximal Independent Set} Maximal independent set has been central in the study of graph algorithms in both the parallel and the distributed models. The seminal work of Luby~\cite{Luby86} and Alon, Babai, and Itai~\cite{AlonBI86} provide $O(\log n)$-round parallel and distributed algorithms for constructing MIS. The distributed complexity in the \local model was first improved by Barenboim et al.\cite{barenboim2012locality} and consequently by Ghaffari~\cite{Ghaffari-MIS}, which led to the current best round complexity of $O(\log \Delta) + 2^{O(\sqrt{\log\log n})}$. In the \congestClique{} model of distributed computing, Ghaffari~\cite{ghaffari2017CCMIS} gave another algorithm which computes an MIS in $\tilde{O}(\sqrt{\log \Delta})$ rounds. A deterministic $O(\log n\log \Delta)$-round \congestClique{} algorithm was given by Censor-Hillel et al.~\cite{censor2017derandomizing}.
		
		It is also worth referring to the literature on one particular MIS algorithm, known as the \emph{randomized greedy MIS}, which is relevant to what we do for MIS. In this algorithm, we permute the vertices uniformly at random and then add them to the MIS greedily. Blelloch et al.~\cite{blelloch2012greedy} showed that one can implement this algorithm in $O(\log^2 n)$ parallel/distributed rounds, and recently Fischer and Noever~\cite{fischer2018tight} improved that to a tight bound of $\Theta(\log n)$. We will show a $O(\log\log \Delta)$-round simulation of the {randomized greedy MIS} algorithm in the MPC and the \congestClique model.

	\subsection{Our Contributions}
	\label{sec:our-contributions}
		
		As our first result, in \cref{sec:MIS} we present an algorithm for constructing MIS.
		\begin{restatable}{theorem}{theoremMIS}\label{thm:MIS} 
			There is an algorithm that with high probability computes an MIS in $O(\log\log \Delta)$ rounds of the MPC model, with $\tilde{O}(n)$-bits of  memory per machine. Moreover, the same algorithm can be adapted to compute an MIS in $O(\log\log \Delta)$ rounds of the \congestClique model.
		\end{restatable}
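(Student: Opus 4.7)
The plan is to design an MPC algorithm that simulates the randomized greedy MIS of Blelloch et al.---assign each vertex an independent uniform rank, then scan vertices in rank order and add each one to the MIS unless a lower-ranked neighbor is already in the MIS---within $O(\log\log\Delta)$ MPC rounds. Since Fischer and Noever showed that this sequential process has critical-path length $\Theta(\log n)$, the main task is to collapse many sequential ``layers'' of the greedy process into a single MPC round.

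The algorithm proceeds in $O(\log\log\Delta)$ phases. Let $G_i$ denote the residual graph at the start of phase $i$ (vertices not yet placed in the MIS and not yet dominated by an MIS vertex), and let $\Delta_i$ be its maximum degree, with $\Delta_1 = \Delta$. In phase $i$ we fix a rank threshold $\tau_i$ of order $\log n / \Delta_i$ and let $S_i$ be the set of residual vertices whose rank is at most $\tau_i$. Because every vertex of $S_i$ has at most $\Delta_i$ residual neighbors and $|S_i| \lesssim \tau_i |V(G_i)|$, the induced subgraph $G_i[S_i]$ has $\tO(n)$ edges with high probability and therefore fits on a single machine. We ship $G_i[S_i]$ to one machine, locally run sequential greedy MIS on it in rank order, broadcast which vertices of $S_i$ joined the MIS, remove these vertices and their neighbors from $G_i$, and start phase $i+1$.

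The main obstacle, and the technical core of the proof, is the degree-shrinking lemma: we must show that $\Delta_{i+1}$ drops at a doubly-exponential rate, e.g.\ $\Delta_{i+1} \le \Delta_i^{1-\Omega(1)}$ with high probability. Concretely, for any vertex $u$, the number of its residual neighbors that survive phase $i$ must be far smaller than $\Delta_i$; a neighbor $v$ survives only if neither $v$ nor any of $v$'s own $S_i$-neighbors joined the MIS during the sequential greedy run on $G_i[S_i]$. The greedy process induces strong dependencies through the random permutation, so proving the needed tail bound will likely require exposing the ranks one at a time, coupling each step with an appropriate independent Bernoulli sampling of ``blockers,'' and applying a bounded-differences / Talagrand-type concentration inequality specifically tailored to the greedy procedure.

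Once the degree-shrinking lemma is in hand, a straightforward induction gives $\Delta_{i+1} = \tO(\Delta_i^{1/2})$, so $O(\log\log\Delta)$ phases drive the residual maximum degree below $\polylog(n)$; at that point the entire residual graph contains only $\tO(n)$ edges and can be finished in $O(1)$ additional rounds on a single machine. Each phase exchanges $\tO(n)$ words per machine, so the same algorithm runs in $O(\log\log\Delta)$ \congestClique rounds by implementing the shipment of $G_i[S_i]$ and the broadcast of MIS decisions via Lenzen's routing.
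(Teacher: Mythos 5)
Your high-level plan matches the paper's: simulate randomized greedy MIS in $O(\log\log\Delta)$ phases, each phase processing a batch of low-rank residual vertices, shipping the induced subgraph to one machine (via Lenzen's routing in \congestClique), and broadcasting the decisions. The paper even uses the same stopping heuristic (phase out once the degree is $\polylog n$). However, there are two substantive gaps in your proposal, and one parameter issue.

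First, you overestimate the difficulty of the degree-shrinking lemma. The paper's version (its \cref{lemma:max-degree-whp}) states: after processing ranks $1$ through $r$, the maximum degree in the residual graph is $O(n\log n/r)$ w.h.p. The proof is a one-paragraph direct argument, not a Talagrand-type concentration bound. Fix a vertex $v$. If $v$ survives to rank $r$ with residual degree at least $d$, then at each of the $r$ greedy steps, $v$ together with at least $d$ of its still-alive neighbors were all in the pool of $\leq n$ candidates, yet the uniformly random pick missed all $d+1$ of them; each step does so with probability at most $1-d/n$, and the picks use disjoint fresh randomness, so the survival probability is at most $(1-d/n)^r \le e^{-rd/n}$. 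For $d = 20 n\log n / r$ this is $n^{-\Omega(1)}$; a union bound over vertices finishes it. No coupling with independent blockers, no bounded-differences inequality, no exposure martingale. You flagged this as "the technical core" and left it as a plan rather than a proof, which is a genuine gap, especially since the statement you would need is strictly weaker and easier than what you sketched.

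Second, your finishing step does not work for \congestClique. After $O(\log\log\Delta)$ phases the residual graph has maximum degree $\polylog n$, hence $\Theta(n\,\polylog n)$ edges. In \congestClique, Lenzen's routing delivers only $O(n)$ messages to any single receiver per round, so shipping the whole residual graph to one node costs $\polylog n$ rounds, which dominates the budget. The paper instead invokes Ghaffari's sparsified MIS algorithm (\cref{sparse-MIS}), which runs in $O(\log\log\Delta')$ \congestClique rounds for $\Delta' = \polylog n$ and thins the residual graph to $O(n)$ edges before the final single-machine cleanup. This step is essential for the \congestClique claim in the theorem, and is also what the paper uses to stay within $\tO(n)$ bits of memory in the MPC version; your proposal omits it entirely.

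Finally, a smaller point: your rank threshold $\tau_i$ of order $\log n/\Delta_i$ is too small to make progress. By the degree-shrinking lemma, processing cumulatively up to rank fraction $\tau$ leaves maximum degree $\Theta(\log n/\tau)$, so with $\tau_i=\log n/\Delta_i$ you get $\Delta_{i+1}=\Theta(\Delta_i)$, not $\tO(\Delta_i^{1/2})$. You need $\tau_i$ on the order of $\Delta_i^{-c}$ for some constant $c<1$; the paper takes $c=3/4$ (ranks up to $n/\Delta^{(3/4)^i}$), which simultaneously forces polynomial degree decay and keeps each per-phase induced subgraph at $O(n)$ edges. With the correct threshold and the two fixes above, your outline does become the paper's proof.
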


		As our second result, in \cref{sec:matching-VC}, we first design an algorithm that returns a $(2+\eps)$-approximate fractional maximum matching and a $(2+\eps)$-approximate integral minimum vertex cover in $O(\log \log{n})$ MPC rounds. Then, in \cref{sec:randomized-rounding}, we show how to round this fractional matching to a $(2 + \eps)$-approximate integral maximum matching.
		In comparison to previous work: our result has somewhat better round-complexity than~\cite{czumaj2017round}, provides a stronger approximation guarantee than~\cite{ABBMS17}, and appears to be simpler than both. After applying vertex-based random partitioning (that was proposed in this context in~\cite{czumaj2017round}), the algorithm repeats only a couple of simple steps to perform all its decisions.
		\begin{restatable}{theorem}{theoremMatchingVC}\label{thm:matching-VC} 
			There is an algorithm that with high probability computes a $(2 + \eps)$-approximate integral maximum matching and a $(2 + \eps)$-approximate integral minimum vertex cover in $O(\log\log n)$ rounds of the MPC model, with $\tilde{O}(n)$-bits of  memory per machine.
		\end{restatable}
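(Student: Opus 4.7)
The plan is to prove \cref{thm:matching-VC} in three layers: first, specify a simple centralized algorithm that produces a $(2+\eps)$-approximate fractional matching together with a $(2+\eps)$-approximate integral vertex cover; second, simulate that centralized algorithm in $O(\log\log n)$ MPC rounds via random vertex partitioning; third, round the fractional matching to an integral matching without losing in the approximation ratio.

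For the centralized algorithm (call it \textsc{Central}), I would initialize $x_e = 1/n$ on every edge $e$. In each iteration, label a vertex $v$ \emph{frozen} if its fractional degree $y_v = \sum_{e \ni v} x_e$ exceeds $1-\eps$, and \emph{active} otherwise; then multiply $x_e$ by $(1+\eps)$ for every edge both of whose endpoints are still active. A standard primal-dual argument shows that after $O(\log n / \poly(\eps))$ iterations every edge has at least one frozen endpoint, $x$ is a $(2+\eps)$-approximate fractional matching, and the set of frozen vertices is a $(2+\eps)$-approximate integral vertex cover by LP duality (every matching edge has a frozen endpoint, and $|F| \leq (1-\eps)^{-1}\sum_v y_v = 2(1-\eps)^{-1}|x|$).

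For the MPC simulation, I would run $O(\log\log n)$ phases, each of which simulates $O(\log n / \log\log n)$ iterations of \textsc{Central}. At the start of each phase, sample a fresh uniformly random partition of the vertices into $k \approx \sqrt{n}$ groups and, for each of the $\binom{k}{2}$ pairs, place the induced bipartite subgraph (with the current edge weights and vertex statuses) on one machine. A Chernoff bound ensures that each such machine holds $\tO(n)$ edges and thus fits in memory. Each machine then locally simulates the batch of \textsc{Central} iterations, using the rescaled estimator $\tilde y_v = k \cdot \sum_{e \ni v \text{ on the machine}} x_e$ in place of the true fractional degree, and at the end reports the updated weights and frozen/active statuses; a simple aggregation step produces the global state for the next phase.

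The heart of the argument, and the main obstacle, is showing that throughout each local simulation the estimators $\tilde y_v$ stay within a $(1\pm\eps')$ multiplicative factor of the true global $y_v$ for every still-active vertex $v$, so that freezing decisions taken locally agree with what \textsc{Central} would do. This amounts to a Chernoff plus union-bound over vertices and iterations, exploiting that an active vertex must have $y_v = \Omega(1)$ large enough for the sampled sum to concentrate. Care is needed for vertices that transition from active to frozen mid-phase, since distinct machines incident to the same vertex may observe slightly different $\tilde y_v$; the fix is to freeze $v$ as soon as \emph{any} machine sees $\tilde y_v \geq 1-\eps$ and to discard further updates on edges incident to $v$ from that moment on. An induction over iterations within a phase, combined with a union bound over the $O(\log\log n)$ phases, then shows that the simulated trajectory tracks \textsc{Central} with high probability, so the final fractional matching and frozen-vertex set inherit the $(2+\eps)$ guarantees. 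Finally, to produce the integral matching promised by the theorem, I invoke the rounding procedure of \cref{sec:randomized-rounding} as a black box, which converts a $(2+\eps)$-approximate fractional matching into a $(2+\eps)$-approximate integral matching in $O(\log\log n)$ additional MPC rounds with $\tO(n)$ memory per machine. Composing the three pieces yields \cref{thm:matching-VC}.
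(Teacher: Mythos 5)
Your overall three-layer structure (centralized water-filling, MPC simulation by vertex partitioning, and rounding) matches the paper's, and the rounding step and the vertex-cover duality argument are fine. But there is a genuine gap in the simulation layer: with a \emph{fixed} freezing threshold $1-\eps$, the claim that ``freezing decisions taken locally agree with what \textsc{Central} would do'' does not follow from concentration of $\tilde y_v$. Even if $\tilde y_v$ is within a $(1\pm\eps')$ factor of $y_v$ for every active vertex, any vertex whose true $y_v$ lies within $\eps' y_v$ of the threshold can have its decision flip, and there is no way to bound the number of such vertices: an adversarial graph can place many vertices right at the threshold simultaneously. Worse, once a few vertices freeze at the wrong time, their neighbors' estimates $\tilde y_u$ are no longer unbiased estimates of the central $y_u$, and the discrepancy compounds multiplicatively over the $\Theta(\log m)$ iterations within a phase. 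Your proposed fix (freeze $v$ as soon as any machine sees $\tilde y_v \ge 1-\eps$) only makes the simulation more aggressive; it does not control divergence from \textsc{Central}. A secondary problem: your claim that ``an active vertex must have $y_v=\Omega(1)$'' is false (active vertices can have arbitrarily small fractional degree, especially early on), so the Chernoff-plus-union-bound sketch as stated does not deliver the per-vertex concentration you need.

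The paper's key idea, which your proposal is missing, is \emph{random thresholding}: each vertex $v$ in each iteration $t$ draws an independent threshold $\cT_{v,t}$ uniformly from $[1-4\eps,\,1-2\eps]$, and both the centralized reference algorithm ($\LocalRand$) and its simulation use the same thresholds. Then, conditioned on $|\tilde y_v - y_v|\le\sigma$, the probability that $v$'s decision differs from $\LocalRand$'s in one iteration is at most $\sigma/\eps$, independently across vertices (\cref{claim:prob-to-differ}). This converts ``the estimate might be on the wrong side of the threshold'' into a bounded-probability event whose effect can be tracked by an induction over iterations (\cref{claim:change-of-sigma}): the error $\sigma$ grows by roughly a factor $4$ per iteration plus an additive $\eps m^{-0.2}$, and since a phase runs for only $I=\Theta(\log m)$ iterations with $m=\sqrt{d}$ ($d$ the current degree bound, not $\sqrt n$ statically), the accumulated error stays below $m^{-0.1}$ throughout the phase (\cref{lemma:max-diff-over-phase}). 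Without this mechanism, the induction you allude to cannot close; you need both the random thresholds and the choice $m=\sqrt d$ with $I=\Theta(\log m)$ iterations per phase for the error budget to hold.
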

		Following similar observations as Assadi et al.~\cite{ABBMS17}, it is possible to apply the techniques of~\cite{McGregor05} on \cref{thm:matching-VC} to obtain the following result.
		\begin{corollary}
			There exists an algorithm that with high probability constructs a $(1 + \eps)$-approximate integral maximum matching in $O(\log \log{n}) \cdot (1/\eps)^{O(1 / \eps)}$ MPC rounds, with $\tilde{O}(n)$-bits of memory per machine.
		\end{corollary}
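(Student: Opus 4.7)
The plan is to combine \cref{thm:matching-VC} with the iterative augmenting-path framework of McGregor~\cite{McGregor05}, which boosts any $(2+\eps)$-approximate matching subroutine into a $(1+\eps)$-approximate one at a multiplicative cost of $(1/\eps)^{O(1/\eps)}$ invocations of the subroutine. Since our subroutine runs in $O(\log\log n)$ MPC rounds with $\tilde O(n)$ memory per machine, the total round complexity becomes $O(\log\log n)\cdot (1/\eps)^{O(1/\eps)}$ as claimed, provided each invocation can be carried out within the same memory regime.

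Concretely, I would proceed as follows. Start by running the algorithm of \cref{thm:matching-VC} to obtain an initial $(2+\eps')$-approximate integral matching $M_0$ for a suitably small $\eps' = \Theta(\eps)$. Then perform $k=(1/\eps)^{O(1/\eps)}$ augmentation phases. In phase $i$, one constructs an auxiliary multigraph $H_i$ on $O(n)$ vertices whose (approximate) matchings correspond to vertex-disjoint collections of short augmenting paths with respect to the current matching $M_{i-1}$; following~\cite{McGregor05}, it suffices to consider augmenting paths of length at most $2\lceil 1/\eps\rceil+1$. One invocation of \cref{thm:matching-VC} on $H_i$ yields a $(2+\eps')$-approximate matching in $H_i$, which, once ``unfolded'' back to the original graph, augments $M_{i-1}$ enough to reduce its multiplicative error by a factor of $(1-\Omega(\eps))$ per phase. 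Iterating $(1/\eps)^{O(1/\eps)}$ times drives the approximation ratio from $(2+\eps')$ down to $(1+\eps)$.

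Two things need to be verified, and I expect the verification to be routine given the known analyses. First, each auxiliary graph $H_i$ must have size $\tilde O(n)$-scale so that it still fits in the MPC model with $\tilde O(n)$ memory per machine; this follows because $H_i$ is built from length-$O(1/\eps)$ augmenting-path structures anchored at matched/unmatched vertices, so its vertex count is $O(n)$ and its edge set can be described and processed using the same primitives (vertex partitioning, coreset/filter-style reductions) used in \cref{sec:matching-VC}. Second, the composition of the McGregor analysis with a $(2+\eps')$-approximate subroutine (rather than an exact one) must still yield the $(1+\eps)$ guarantee; this is precisely the observation made by Assadi et al.~\cite{ABBMS17}, who note that McGregor's argument is robust to approximate computation of augmenting-path matchings, costing only an extra constant factor absorbed into $\eps'$.

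The main obstacle, and the only step that is not immediate from plugging in \cref{thm:matching-VC}, is ensuring that the auxiliary-graph construction at the start of each phase can itself be implemented in $O(\log\log n)$ MPC rounds with $\tilde O(n)$ memory per machine; this is however handled by standard MPC primitives (sorting, aggregation, and vertex-based partitioning as used in~\cite{czumaj2017round} and in the proof of \cref{thm:matching-VC}), so the overall round count remains $O(\log\log n)\cdot (1/\eps)^{O(1/\eps)}$.
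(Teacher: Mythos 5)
Your proposal is correct and follows essentially the same route as the paper, which proves the corollary simply by citing the observation of Assadi et al.~\cite{ABBMS17} that McGregor's layered augmenting-path framework~\cite{McGregor05} can be composed with the $(2+\eps)$-approximate matching algorithm of \cref{thm:matching-VC}. Your write-up merely unpacks the details that the paper leaves implicit in that citation.
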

		As noted by Czumaj et al.~\cite{czumaj2017round}, the result of Lotker et al.~\cite{lotkerMatching} can be used to obtain the following result.
		\begin{corollary}
			There exists an algorithm that outputs a $(2+\eps)$-approximation to maximum weighted matching in $O(\log \log{n} \cdot (1/ \eps))$ MPC rounds and $\tO(n)$-bits of memory per machine.
		\end{corollary}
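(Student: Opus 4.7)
\medskip

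\noindent\textbf{Proof plan.} The plan is to derive this corollary as a direct consequence of \cref{thm:matching-VC} by invoking the reduction of Lotker, Patt-Shamir, and Pettie~\cite{lotkerMatching}, which converts any algorithm that computes a $(2+\delta)$-approximate \emph{unweighted} maximum matching in $T$ rounds into one that computes a $(2+\eps)$-approximate \emph{weighted} maximum matching in $O(T/\eps)$ rounds, where $\delta$ is chosen as a suitable function of $\eps$. Plugging in $T=O(\log\log n)$ from \cref{thm:matching-VC} immediately yields the claimed $O(\log\log n\cdot (1/\eps))$-round complexity.

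First, I would set up the reduction abstractly: rescale and geometrically bucket the edge weights so that edges within the same bucket have (nearly) identical weight. Then process buckets from heaviest to lightest, invoking the unweighted matching subroutine on the subgraph induced by the current bucket together with a ``virtual'' edge set that encodes the matching maintained so far, and update the matching by a heavier-edge-wins swap rule. Lotker et al.\ show that $O(1/\eps)$ such phases suffice to accumulate a $(2+\eps)$-approximate weighted matching; the key invariant is that each phase is permitted to lose only an $O(\eps)$-fraction of the weight of the optimum restricted to the bucket under consideration.

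Second, I would verify MPC compatibility: each phase is a single call to the subroutine of \cref{thm:matching-VC} on a subgraph whose edge set fits within the existing $\tO(n)$-per-machine memory budget, plus $O(1)$ rounds of global bookkeeping to broadcast the current matching (of size $O(n)$) and perform the swap. Hence each phase costs $O(\log\log n)$ MPC rounds with $\tO(n)$ memory per machine, and $O(1/\eps)$ phases give the stated bound.

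The main obstacle---and the reason this is merely stated as a corollary---is to check that the analysis of Lotker et al., originally phrased for distributed models such as \local, carries over verbatim to MPC when the local subroutine is replaced by \cref{thm:matching-VC}. Since the reduction is model-agnostic (it treats the unweighted matching routine as a black box and only manipulates matchings of size $O(n)$), the verification reduces to confirming that each invocation of \cref{thm:matching-VC} tolerates an adversarially chosen weighted subgraph, which follows because the subroutine's guarantees depend only on the edge and vertex counts of its input.
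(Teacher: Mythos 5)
Your proof takes essentially the same approach as the paper: both treat the Lotker et al.\ weighted-to-unweighted reduction as a black box and compose it with \cref{thm:matching-VC}. The paper itself offers nothing beyond the citation, so your framing---state the black-box guarantee, multiply the round counts, and observe that the reduction only shuffles around matchings of size $O(n)$ and therefore fits the $\tO(n)$-per-machine budget---captures the intended argument.

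One caveat on your internal sketch of the reduction: geometric bucketing with ratio $(1+\eps)$ produces $\Theta\bigl(\log(w_{\max}/w_{\min})/\eps\bigr)$ buckets, which can be $\Theta(\log n/\eps)$, so a straightforward heaviest-to-lightest sweep processing one bucket per phase cannot terminate after $O(1/\eps)$ invocations of the unweighted subroutine. The Lotker--Patt-Shamir--Pettie machinery is more refined than a per-bucket sweep in order to achieve the stated $O(1/\eps)$ (or, in some statements of their theorem, $O(\log(1/\eps))$) overhead. Since you explicitly use the reduction as a black box and the round-complexity arithmetic is driven by their theorem statement rather than by your sketch, this imprecision does not invalidate the argument, but the sketch as written would suggest a $\Theta(\log n \cdot \log\log n / \eps)$ bound rather than $O(\log\log n/\eps)$, so it is worth flagging that the ``why $O(1/\eps)$ phases suffice'' step is doing nontrivial work inside the cited result.
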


		For the sake of clarity, we present our algorithms for the case in which each machine has $\tO(n)$-bits of memory (or $O(n)$ words of memory). However, similarly to~\cite{czumaj2017round}, our algorithm for matching and vertex cover can be adjusted to still run in $O(\log \log{n})$ MPC rounds even when the memory per machine is $O(n / \polylog{n})$.
		
		\subsection{Our Techniques}
			\paragraph{Maximal independent set}
				Our MPC algorithm for MIS is based on the randomized greedy MIS algorithm. We show how to efficiently implement this algorithm in only $O(\log \log {n})$ MPC and \congestClique rounds.
				
			\paragraph{Maximum matching and vertex cover}
				In \cref{sec:basic-LP-alg}, we start from a sequential algorithm that outputs a $(2 + \eps)$-approximate fractional maximum matching and a $(2 + \eps)$-approximate integral minimum vertex cover. The algorithm maintains edge-weights. Initially, every edge-weight is set to $1/n$. Then, gradually, at each iteration the edge-weights are simultaneously increased by a multiplicative factor of $1 / (1 - \eps)$. Each vertex whose sum of the incident edges becomes $1 - 2\eps$ or larger is frozen, and its incident edges do not change their weights afterward. The vertices that are frozen in this process constitute the desired vertex cover. It is not hard to see that after $O(\log{n} / \eps)$ iterations every edge will be incident to at least one frozen vertex, and at this point the algorithm terminates.

				In \cref{sec:actual-simulation-in-MPC}, we show how to simulate this sequential algorithm in the MPC model, by on average simulating $\Theta(\log{n} / \log \log{n})$ iterations in $O(1)$ MPC rounds. As the first step, motivated by~\cite{czumaj2017round}, we apply vertex-based sampling. Namely, the vertex-set is randomly partitioned across the machines into disjoint sets, and each machine considers only the induced graph on its local copy of vertices. 
				Then, during each MPC round, every machine simulates several iterations of the sequential algorithm on its local subgraph. During this simulation, each machine estimates weights of the vertices that it maintains locally in order to decide which vertices should be frozen. However, even if the estimates are sharp, only a slight error could potentially cause many vertices to largely deviate from their true behavior. To alleviate this issue, instead of having a fixed threshold $1 - 2\eps$, for each vertex and in every iteration we choose a random threshold from the interval $[1 - 4\eps, 1- 2 \eps]$. For most vertices, this prevents slight errors in estimates from having large effects. Then, vertices are frozen only if their estimated weight is above their randomly chosen threshold. Intuitively, this significantly reduces the chance of these decisions (on whether to freeze a vertex or not) deviating from the true ones

				As our final component, in \cref{sec:randomized-rounding}, we provide a rounding procedure that for a given fractional matching produces an integral one of size only a constant-factor smaller than the size of the fractional matching. Furthermore, every vertex in that rounding method chooses edges based only on its neighborhood, i.e., makes local decision. Thus it is straightforward to parallelize the rounding procedure.
\section{Preliminaries}
\paragraph{Notation}
	For a graph $G = (V, E)$ and a set $V' \subseteq V$, $G[V']$ denotes the subgraph of $G$ \emph{induced} on the set $V'$, i.e., $G[V'] = (V', E \cap (V' \times V'))$. We use $N(v)$ to refer to the neighborhood of $v$ in $G$.
	Throughout the paper, we use $n := |V|$ to denote the number of vertices in the input graph.




\paragraph{Independent Sets} An {\em independent set} $I \subseteq V$ is a subset of non-adjacent
vertices. An independent set $I$ is {\em maximal} if for every $v \in V \setminus I$, $I \cup \{ v\}$ is not an independent set.

Ghaffari gave the following result that we will reuse in this paper:

\begin{theorem}[Ghaffari \cite{ghaffari2017CCMIS}]
\label{sparse-MIS}
 Let $G$ be an $n$-vertex graph with $\Delta(G) = \poly \log (n)$. Then, there exists a distributed algorithm that runs in the 
 \CONGESTED model and computes an MIS on $G$ in $O(\log \log \Delta)$ rounds.
\end{theorem}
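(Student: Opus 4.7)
The plan is a standard two-phase MIS scheme: a randomized shattering phase followed by a gather-and-solve phase that leverages the \CONGESTED model's all-to-all bandwidth together with the hypothesis $\Delta = \polylog(n)$.

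For the shattering phase, I would run a randomized \local-model MIS subroutine (e.g., that of Ghaffari~\cite{Ghaffari-MIS}) with the property that, after $\Theta(\log \Delta)$ rounds, with high probability the subgraph $G'$ induced on the vertices whose MIS status is still undecided has connected components of size at most $\polylog(n)$. The goal is to compress this simulation into $O(\log \log \Delta)$ \CONGESTED rounds. The idea is to exploit that in one \CONGESTED round each vertex can both broadcast and receive $O(n \log n)$ bits to and from the whole graph, while its true neighborhood has only $\polylog(n)$ vertices. Consequently, in a single \CONGESTED round one can gather enough information about extended neighborhoods to internally simulate several consecutive \local rounds of the subroutine before the next exchange. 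A doubling-style schedule for how many \local rounds are simulated between communication rounds then yields the desired $O(\log \log \Delta)$ count.

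Once $G'$ has components of size $\polylog(n)$, the total residual data is $\tilde{O}(n)$ bits. Using Lenzen's routing scheme, each such component can be collected at a distinct host vertex in $O(1)$ \CONGESTED rounds, after which each host computes an MIS on its hosted component by local enumeration, contributing no additional communication rounds.

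The main obstacle lies in Phase 1: justifying the compressed simulation while respecting the per-vertex $O(n \log n)$-bit per-round bandwidth. The hypothesis $\Delta = \polylog(n)$ is precisely what makes this feasible, as it keeps $k$-hop neighborhoods small enough to be broadcast and recombined within the available capacity, and it is also what drives the exponential speedup from $O(\log \Delta)$ \local rounds to $O(\log \log \Delta)$ \CONGESTED rounds. Given Phase 1, Phase 2 is then a routine application of load-balanced routing combined with purely local solving.
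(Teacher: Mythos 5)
The paper itself gives no proof of this statement; it is quoted as Theorem~1.1 of Ghaffari~\cite{ghaffari2017CCMIS}, so there is no in-paper argument to compare against. That said, your Phase~1 is essentially the strategy of the cited work: accelerate a \local-model shattering algorithm (such as Ghaffari's~\cite{Ghaffari-MIS}) by graph exponentiation, using that when $\Delta = \polylog n$ the $O(\log \Delta)$-hop balls have size $\Delta^{O(\log\Delta)} = n^{o(1)}$ and thus fit in the $O(n\log n)$-bit per-vertex bandwidth of \congestClique; ball-doubling then reaches radius $O(\log\Delta)$ in $O(\log\log\Delta)$ rounds, after which each vertex can simulate the whole shattering phase locally.

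Your Phase~2, as written, has a gap. After Phase~1 a vertex knows only its $O(\log\Delta)$-hop ball, but the shattered components have size (and hence possibly diameter) up to $\polylog n \gg \log\Delta$, so a vertex in general does not know which component it belongs to, let alone which ``host'' it should send its edges to. Learning the component by further doubling in the residual graph costs $\Theta(\log(\polylog n)) = \Theta(\log\log n) = \Theta(\log\Delta)$ additional rounds, which exceeds the $O(\log\log\Delta)$ budget. Also, your bound of $\tilde{O}(n)$ bits of residual data is $n\,\polylog n$ bits, which does not fit on a single machine either, so neither per-component hosting nor single-machine gathering follows from what you have established. The step that the cited result actually relies on is a sharper consequence of shattering: each vertex survives Phase~1 with probability at most $\Delta^{-c}$ for an adjustable constant $c$, and because these events have only local dependence, a concentration bound shows that the \emph{entire} residual graph has $O(n\Delta^{1-c}) = O(n)$ edges w.h.p. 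With that, one gathers the whole residual graph to a single leader via Lenzen's routing in $O(1)$ rounds, solves MIS there, and broadcasts the answer in one more round---no component identification needed.
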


\paragraph{Routing} As a subroutine, our algorithm needs to solve the following simple routing task: Let $u \in V$ be an arbitrary vertex. 
Suppose that every other vertex $v \in V \setminus \{u \}$ holds $0 \le n_{v} \le n$ messages each of size $O(\log n)$ that it wants to 
deliver to $u$. We are guaranteed that $\sum_{v \in V} n_v \le n$. Lenzen proved that in the \CONGESTED model there is a deterministic 
routing scheme that achieves this task in $O(1)$ rounds \cite{lenzen2013route}. In the following, we will refer to this scheme as Lenzen's 
routing scheme.

\paragraph{Relevant Concentration Bounds}
Throughout the paper, we will use the following well-known variants of Chernoff bound.
\begin{theorem}[Chernoff bound]\label{lemma:chernoff}
	Let $X_1, \ldots, X_k$ be independent random variables taking values in $[0, 1]$. Let $X \eqdef \sum_{i = 1}^k X_i$ and $\mu \eqdef \E{X}$. Then,
	\begin{enumerate}[(A)]
		\item\label{item:delta-at-most-1} For any $\delta \in [0, 1]$ it holds $\prob{|X - \mu| \ge \delta \mu} \le 2 \exp\rb{- \delta^2 \mu / 3}$.
		\item\label{item:delta-at-least-1} For any $\delta \ge 1$ it holds $\prob{X \ge (1 + \delta) \mu} \le \exp\rb{- \delta \mu / 3}$.
	\end{enumerate}
\end{theorem}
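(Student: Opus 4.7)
The plan is to follow the classical Chernoff--Cramér exponential moment approach. For any $t > 0$, I would apply Markov's inequality to the positive random variable $e^{tX}$: $\Prob{X \ge a} = \Prob{e^{tX} \ge e^{ta}} \le e^{-ta}\, \E{e^{tX}}$. By independence of the $X_i$'s, $\E{e^{tX}} = \prod_i \E{e^{tX_i}}$. Since each $X_i \in [0,1]$, the convexity of $x \mapsto e^{tx}$ on $[0,1]$ gives $e^{tX_i} \le 1 - X_i + X_i e^{t} = 1 + X_i(e^t-1)$; taking expectations and using $1+y \le e^y$ yields $\E{e^{tX_i}} \le \exp\bigl(\E{X_i}(e^t-1)\bigr)$. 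Multiplying over $i$ produces the key bound $\E{e^{tX}} \le \exp\bigl(\mu(e^t-1)\bigr)$.

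From here, to prove the upper tail in both (A) and (B), I would set $a = (1+\delta)\mu$ and optimize by choosing $t = \ln(1+\delta) > 0$, which yields
\[
\Prob{X \ge (1+\delta)\mu} \le \exp\bigl(\mu\bigl(\delta - (1+\delta)\ln(1+\delta)\bigr)\bigr).
\]
The remaining step is the standard pair of analytic inequalities for $\varphi(\delta) := (1+\delta)\ln(1+\delta) - \delta$: I would verify $\varphi(\delta) \ge \delta^2/3$ for $\delta \in [0,1]$ (giving part (A), upper side) by comparing Taylor expansions at $0$ or by showing the difference is nonnegative with a single derivative check, and $\varphi(\delta) \ge \delta/3$ for $\delta \ge 1$ (giving (B)) by checking the boundary $\delta = 1$, where $2\ln 2 - 1 \approx 0.386 \ge 1/3$, and noting $\varphi'(\delta) = \ln(1+\delta) \ge 1/3$ on $[1,\infty)$.

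For the lower tail needed to complete (A), the symmetric argument applies to $-X$. Using $t > 0$ and the same convexity bound $e^{-tX_i} \le 1 - X_i + X_i e^{-t}$, I would get $\E{e^{-tX}} \le \exp\bigl(\mu(e^{-t} - 1)\bigr)$, and then by Markov's inequality
\[
\Prob{X \le (1-\delta)\mu} \le \exp\bigl(\mu(e^{-t}-1) + t(1-\delta)\mu\bigr),
\]
optimized at $t = -\ln(1-\delta)$ to give $\exp\bigl(-\mu\bigl((1-\delta)\ln(1-\delta) + \delta\bigr)\bigr)$. A short analytic check shows $(1-\delta)\ln(1-\delta) + \delta \ge \delta^2/2 \ge \delta^2/3$ for $\delta \in [0,1]$, so the lower tail is at most $\exp(-\delta^2\mu/3)$. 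Combining the two tails with a union bound produces the factor of $2$ in (A).

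The only non-routine part is the final analytic comparison of $\varphi$ with the quadratic (resp.\ linear) bound; everything else is bookkeeping around Markov's inequality and the MGF of a $[0,1]$-valued random variable. I would not expect any real obstacle, since these inequalities can be dispatched by one or two derivative computations.
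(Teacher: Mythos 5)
Your proof is correct and follows the classical Cramér/MGF route, including the standard convexity bound $e^{tX_i}\le 1-X_i+X_ie^t$ for $[0,1]$-valued variables, the optimal choice $t=\ln(1+\delta)$ (resp.\ $t=-\ln(1-\delta)$), and the analytic comparisons $(1+\delta)\ln(1+\delta)-\delta\ge\delta^2/3$ on $[0,1]$, $(1+\delta)\ln(1+\delta)-\delta\ge\delta/3$ on $[1,\infty)$, and $(1-\delta)\ln(1-\delta)+\delta\ge\delta^2/2$ on $[0,1]$; all of these check out by a single derivative comparison as you indicate. The paper itself does not prove this statement---it records the Chernoff bound as a well-known fact and simply cites it in the sequel---so there is no paper proof to compare against, but your argument is the textbook one and fills the gap correctly.
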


\section{Maximal Independent Set}
\label{sec:MIS}
The \textsc{Greedy} algorithm for maximal independent set processes the vertices of an input graph in arbitrary
order. The algorithm adds the current vertex under consideration to an initially empty independent set $I$ if none 
of its neighbors are already in $I$. 

This algorithm progressively thins out the input graph, and the rate at which the graph loses edges depends 
heavily on the order in which the vertices are considered.  
Consider a sequential random greedy algorithm that ranks/permutes vertices $1$ to $n$ randomly and then greedily adds vertices to the MIS, while walking through this permutation. As it was observed in \cite{acgmw15} in the context of correlation clustering in the streaming model, the number of edges in the residual graph decreases relatively quickly with high probability. 
 In this section, we simulate this algorithm in $O(\log \log \Delta)$ rounds of the MPC and the \congestClique model, thus proving the following result.

\theoremMIS*

\subsection{Randomized Greedy Algorithm for MIS}
Let us first consider a randomized variant of the sequential greedy MIS algorithm described below, that we show how to implement in the \congestClique and the MPC model. We remark that this algorithm has been studied before in the literature of parallel algorithms\cite{fischer2018tight, blelloch2012greedy}. 

\medskip
\smallskip
\begin{minipage}{0.95\linewidth}
\begin{mdframed}[hidealllines=true, backgroundcolor=gray!15]
\vspace{-3pt}
\paragraph{Greedy Randomized Maximal Independent Set} 
\begin{itemize}
\item[-] Initially, choose a permutation $\pi:[n]\rightarrow [n]$ uniformly at random.  
\item[-] Repeat until the next rank is at least $n / \log^{10}{n}$ and the maximum degree is at most $\log^{10}{n}$:
\begin{itemize}
\item[(A)] Mark the vertex $v$ which has the smallest rank among the remaining vertices according to $\pi$, and add $v$ to the MIS.
\item[(B)] Remove all the neighbors of $v$.
\end{itemize}
\item[-] Run $O(\log \log \Delta)$ rounds of the Sparsified MIS Algorithm of~\cite{ghaffari2017CCMIS} in the remaining graph. Remove from the graph the constructed MIS and its neighborhood.
\item[-] Deliver the remaining graph on a single machine and find its MIS.
\item[-] At the end, output the constructed MIS sets.
\end{itemize}
\end{mdframed}
\end{minipage}
\medskip

\subsection{Simulation in $O(\log\log \Delta)$ rounds of MPC and \congestClique}
\label{sec:MIS-simulation}

\paragraph{Simulation in the MPC model} We now explain how to simulate the above algorithm in the MPC model with $O(n\log n)$-bits of memory per machine, and also in the \congestClique model. In each iteration, we take an induced subgraph of $G$ that is guaranteed to have $\tilde{O}(n)$ edges and simulate the above algorithm on that graph. We show that the total number of edges drops fast enough, so that $O(\log\log \Delta)$ rounds will suffice. 
More concretely, we first consider the subgraph induced by vertices with ranks $1$ to $n/{\Delta}^\alpha$, for $\alpha=3/4$. This subgraph has $O(n)$ edges, with high probability. So we can deliver it to one machine, and have it simulate the algorithm up to this rank. Now, this machine sends the resulting MIS to all other machines. Then, each machine removes its vertices that are in MIS or neighboring MIS. In the second phase, we take the subgraph induced by remaining vertices with ranks $n/\Delta^\alpha$ to $n/\Delta^{\alpha^2}$. Again, we can see that this subgraph has $O(n)$ edges (a proof is given below), so we can simulate it in $O(1)$ rounds. 
More generally, in the $i$-th iteration, we will go up to rank $n/\Delta^{\alpha^i}$. Once the next rank becomes $n / \log^{10}{n}$, which as we show happens after $O(\log \log {\Delta})$ rounds, the maximum degree of the graph is some value $\Delta'\leq O(\log^{11}{n})$ (see \cref{lemma:max-degree-whp}). Note that clearly also $\Delta'\leq \Delta$. At that point, we apply the MIS Algorithm of~\cite{ghaffari2017CCMIS} for sparse graphs to the remaining graph. This algorithm is applicable whenever the maximum degree is at most $2^{O(\sqrt{\log n})}$ (see Theorem~1.1 of~\cite{ghaffari2017CCMIS}). After $O(\log \log \Delta')$ rounds, w.h.p., that algorithm finds an MIS which after removed along with its neighborhood results in the graph having $O(n)$ edges. Now we deliver the whole remaining graph to one machine where it is processed in a single MPC round. 


We note that the Algorithm of~\cite{ghaffari2017CCMIS} performs only simple local decisions with low communication, and hence every iteration of the algorithm can be implemented in $O(1)$ MPC rounds, with $\tO(n)$ memory per machine, by using standard techniques.

\medskip
\paragraph{Simulation in \congestClique} We now argue that each iteration can be implemented in $O(1)$ rounds of \congestClique. To simulate the first step of the algorithm, all vertices agree on a uniform random order as follows: the vertex with the smallest ID choses a uniform random order 
locally and informs all other vertices about their positions within the order. Then, all vertices broadcast their positions to all
other vertices. As a result, all vertices know the entire order. Also, in each iteration, we make all vertices with permutation rank in the selected range send their edges to the leader vertex. Here, the leader is an arbitrarily chosen vertex, e.g., the one with the minimum identifier. As we show below, the number of these edges per iteration is $O(n)$ with high probability, and thus we can deliver all the messages to the leader in $O(1)$ rounds using Lenzen's routing method\cite{lenzen2013route}. Then, the leader can compute the MIS among the vertices with ranks in the selected range. It then reports the result to all the vertices in a single round, by telling each vertex whether it is in the computed independent set or not. A single round of computation, in which the vertices in the independent set report to all their neighbors, is then used to remove all the vertices that have a neighbor in the independent set (or are in the set). After these steps, the algorithm proceeds to the next iteration.

Regarding the round-complexity of the algorithm once the rank becomes $n / \log^{10}{n}$: The work~\cite{ghaffari2017CCMIS} already provides a way to solve MIS in $O(\log \log{\Delta'})$ \congestClique rounds for any $\Delta'=2^{O(\sqrt{\log n})}$. Here, $\Delta'$ is the maximum degree of the graph remained after processing the vertices up to rank $n / \log^{10}{n}$, and, as we show by \cref{lemma:max-degree-whp}, that $\Delta'\leq \polylog{n} \ll 2^{O(\sqrt{\log n})}$. Hence, the overall round complexity is again $O(\log\log \Delta)$ rounds. 

\subsection{Analysis}
Since by the $i$-th iteration the algorithm has processed the ranks up to $n/\Delta^{\alpha^i}$, the rank $n / \log^{10}{n}$ is processed within $O(\log\log \Delta)$ iterations. In the proof of \Cref{thm:MIS} presented below, we prove that with high probability the number of edges sent to one machine per phase is $O(n)$. Before that, we present a lemma that will aid in bounding the degrees and the number of edges in our analysis. A variant of this lemma was proved in \cite{acgmw15}.

\begin{lemma}\label{lemma:max-degree-whp}
	Suppose that we have simulated the algorithm up to rank $r$. Let $G_r$ be the remaining graph. Then, the maximum degree in $G_r$ is $O(n \log n/r)$ with high probability.
\end{lemma}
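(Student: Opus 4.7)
I would fix a vertex $v \in V$ and show that $\Pr[\deg_{G_r}(v) > d] \le n^{-3}$ for $d = C n \log n / r$ with a sufficiently large absolute constant $C$; a union bound over the $n$ choices of $v$ then yields the high-probability bound on the maximum degree.

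I would split the argument into two cases based on the original degree $\Delta(v)$. If $\Delta(v) \le d$, the claim is immediate since $\deg_{G_r}(v) \le \Delta(v) \le d$ holds deterministically. Otherwise $\Delta(v) > d$, and the strategy shifts: instead of directly bounding the number of surviving neighbors of $v$, I would argue that $v$ \emph{itself} is very unlikely to remain in $G_r$, so that $\deg_{G_r}(v) = 0$ with high probability. Let $T := \{u \in N(v) : \mathrm{rank}(u) \le r\}$ denote the neighbors of $v$ among the first $r$ ranks. Then $|T|$ is hypergeometrically distributed with mean $\mu := \Delta(v) \cdot r / n > C \log n$, so a standard Chernoff bound for hypergeometric variables gives $|T| \ge \mu/2 \ge (C/2) \log n$ with probability at least $1 - n^{-3}$, for $C$ large enough.

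Conditional on $|T|$ being this large, for $v$ to remain in $G_r$ it is necessary that \emph{no} vertex of $T$ is added to the MIS --- equivalently, each $u \in T$ must be preemptively removed (by some MIS vertex with strictly smaller rank) before its own turn in the processing order. I would bound the probability of this compound event by following the analysis of~\cite{acgmw15}, where a variant of the present lemma is established: the idea is to expose the ranks in increasing order and, at each step, lower bound (by the remaining randomness) the conditional probability that the next exposed vertex of $T$ is added to the MIS. A careful accounting yields a bound exponentially small in $|T|$, and hence at most $n^{-3}$ for $C$ sufficiently large.

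The main obstacle I expect is precisely this last step: controlling the correlations between the events ``$u$ is added to the MIS'' for different $u \in T$, which prevents a direct product-form independence bound. All the remaining components --- the case split on $\Delta(v)$, the Chernoff concentration on $|T|$, and the final union bound over the $n$ choices of $v$ --- are routine.
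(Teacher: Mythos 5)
Your proposal departs from the paper's proof in a way that introduces a genuine error, not just a technical gap. The paper does not split on the \emph{initial} degree $\Delta(v)$ at all: it tracks the \emph{current residual} degree of $v$ as the ranks are exposed. Concretely, the paper argues that at any step where $v$ still has residual degree at least $d$ in the remaining graph, the next rank lands on $v$ or on one of its $\ge d$ surviving neighbors with probability at least $d/n$, in which case $v$ is removed. Thus either the residual degree drops below $d$ at some point (and stays there, since it is non-increasing), or at each of the $r$ steps one pays a factor $(1-d/n)$, giving $\Pr[\deg_{G_r}(v)\ge d]\le(1-d/n)^r\le\exp(-rd/n)$; a union bound finishes the proof. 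No reference to the original degree of $v$, and no case split, is needed.

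Your case $\Delta(v)>d$ hinges on the claim that $v$ itself is removed by step $r$ with probability $1-n^{-3}$, equivalently that, conditioned on $|T|\ge(C/2)\log n$, the event ``no vertex of $T$ enters the MIS'' has probability at most $n^{-3}$. Both claims are simply false, not merely hard to prove. Take $U=N(v)$ with $|U|=d+1$, add a set $W$ of size $n/4$ with every $w\in W$ adjacent to all of $U$ but not to $v$, and make $U$ and $W$ independent sets. With probability bounded away from zero the lowest-ranked vertex among $U\cup W\cup\{v\}$ lies in $W$ and has rank $\le r$; that vertex enters the MIS, removing all of $U$ (hence all of $T$), and $v$ then survives forever with $\deg_{G_r}(v)=0$. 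Here $\Delta(v)>d$, $|T|$ is large, yet no vertex of $T$ ever joins the MIS and $v\in G_r$ with constant probability. So the step you flagged as ``the main obstacle'' is not a correlation issue to be finessed --- the target statement does not hold. The lemma is nonetheless true in this example because the residual degree collapses to $0$, which is exactly the quantity the paper's argument controls. To repair your proof you would need to replace the appeal to $\Delta(v)$ and $T$ by the paper's step-by-step bound on the current residual degree; the hypergeometric concentration on $|T|$ and the attempt to show ``some $u\in T$ enters the MIS'' should be dropped entirely.
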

\begin{proof}
	We first upper-bound the probability that $G_r$ contains a vertex of degree at least $d$. Then, we conclude that the degree of every vertex in $G_r$ is $O(n \log n/r)$ with high probability.

		Consider a vertex $v$ whose degree is still $d$. When the sequential algorithm considers one more vertex, which is selected by choosing a random vertex among the remaining vertices, then vertex $v$ or one of its neighbors gets chosen with probability at least $d/n$. If that happens, then $v$ is removed. The probability that this does not happen throughout ranks $1$ to $r$ is at most $(1-d/n)^{r} \le \exp(- r d / n)$. Now, the probability that a vertex in $G_r$ has degree more than $20n\log n/r$ is at most $1/n^{5}$, which implies that, the maximum degree of $G_r$ is at most $20 n \log{n} / r$ with probability at least $1 - n^{-4}$. 
\end{proof}

We are now ready to prove the main theorem of this section.
\begin{proof} [Proof of \Cref{thm:MIS}]
We first argue about the MPC round-complexity of the algorithm, and then show that it requires $\tO(n)$ memory.

\paragraph{Round complexity}
	Recall that the algorithm considers ranks of the form $r_i := n / \Delta^{\alpha^i}$, until the rank becomes $n/\log^{10}{n}$ or greater. When that occurs, it applies other algorithms for $O(\log \log \Delta)$ iterations, as described in \cref{sec:MIS-simulation}. Hence, the algorithm runs for at most $\istar + \log \log{\Delta}$ iterations, where $\istar$ is the smallest integer such that rank $r_{\istar} := n / \Delta^{\alpha^{\istar}} \ge n / \log^{10}{n}$. A simple calculation gives $\istar \le \log_{4/3} \log \Delta$, for $\alpha = 3/4$. Furthermore, every iteration can be implemented in $O(1)$ rounds as discussed above.
	
\paragraph{Memory requirement}
	We first discuss the memory required to implement the process until the rank becomes $O(n / \log^{10}{n})$. By \cref{lemma:max-degree-whp} we have that after the graph up to rank $r_i$ is simulated, the maximum degree in the remaining graph is $O(n \log{n} / r_i)$ w.h.p. Observe that it also trivially holds in the first iteration, i.e. the initial graph has maximum degree $O(n)$. Let $G_i$ be the graph induced by the ranks between $r_i$ and $r_{i + 1}$. Then, a neighbor $u$ of vertex $v$ appears in $G_i$ with probability $(r_{i + 1} - r_i) / (n - r_i) \le r_{i + 1} / n$. Hence, the expected degree of every vertex in this graph is at most
	\[
		\mu := \Theta(n \log{n} / r_i \cdot r_{i + 1} / n) = \Theta\rb{\Delta^{(1 - \alpha) \alpha^i} \log n}.
	\]
	Since $\mu \ge \log{n}$, by the Chernoff bound (\cref{lemma:chernoff}) we have that every vertex in $G_i$ has degree $O(\mu)$ w.h.p. Now, since there are $O(r_{i + 1})$ vertices in $G_i$, we have that $G_i$ contains
	\begin{equation}\label{eq:G-has-edges}
		O\rb{r_{i + 1} \Delta^{(1 - \alpha) \alpha^i} \log n} = O\rb{n \Delta^{- \alpha^i / 2} \log{n}}
	\end{equation}
	many edges w.h.p., where we used that $\alpha = 3/4$. Recall that the algorithm iterates over the ranks until the maximum degree becomes less than $\log^{10}{n}$. Also, $\Theta(n \log{n} / r_i)$ upper-bounds the maximum degree (see \cref{lemma:max-degree-whp}). Hence, we have
	\[
		\Theta(n \log{n} / r_i) \ge \log^{10}{n} \; \implies \; \Delta^{\alpha^i} \ge \Omega\rb{\log^{9}{n}}.
	\]
	Combining the last implication with \cref{eq:G-has-edges} provides that $G_i$ contains $O(n)$ edges w.h.p.
	
	After the rank becomes $n / \log^{10}{n}$ or greater, we run the \congestClique algorithm of~\cite{ghaffari2017CCMIS} for $O(\log \log \Delta)$ iterations. Since that algorithm performs only simple local decisions with low communication, every iteration of the algorithm can be implemented in $O(1)$ MPC rounds, with $\tO(n)$ memory per machine, by using standard techniques. 
Finally, using \cref{sparse-MIS}, we conclude that the MIS will be computed after $O(\log \log \Delta)$ rounds in the MPC or the \congestClique model.

\end{proof}

\section{Matching and Vertex Cover, Simple Approximations}
\label{sec:matching-VC}
In this section, we describe a simple algorithm that leads to a fractional matching of weight within a $(2+\eps)$-factor of (integral) maximum matching and, the same algorithm, leads to a $2+\eps$ approximation of minimum vertex cover, for any small constant $\eps>0$. In the next section (\cref{sec:randomized-rounding}), we explain how to obtain an integral $(2+\eps)$-approximate maximum matching from the described fractional one. That result, along with standard techniques underlined in \cref{sec:our-contributions}, provides $(1+\eps)$-approximation of maximum matching.

In \cref{sec:basic-LP-alg}, we first present the advertised algorithm that runs in $O(\log{n})$ rounds. Then, in \cref{sec:fractional-LP-in-MPC} and \cref{sec:actual-simulation-in-MPC}, we explain how to simulate this algorithm in $O(\log\log n)$ rounds of the MPC model. In \cref{sec:analysis-matching} we provide the analysis of this simulation. 

\subsection{Basic $O(\log n)$-iteration Centralized Algorithm}
\label{sec:basic-LP-alg}
We now provide a simple centralized algorithm for obtaining the described fractional matching and minimum vertex cover. We refer to this algorithm as $\Local$.

\smallskip
\begin{minipage}{0.95\linewidth}
\begin{mdframed}[hidealllines=true, backgroundcolor=gray!15]
\vspace{-3pt}
\paragraph{$\Local$: Centralized $O(\log n)$-round Fractional Matching and Vertex Cover} 
\begin{itemize}
\item[-] Initially, for each edge $e \in E$, set $x_e=1/n$. 
\item[-] Then, until each edge is frozen, in iteration $t$: 
\begin{itemize}
\item[(A)] Freeze each vertex $v$ for which $y_v=\sum_{e \ni v}x_e \geq 1 - 2 \eps$ and freeze all its edges.
\item[(B)] For each active edge, set $x_e\gets x_e / (1-\eps)$. 
\end{itemize}
\item[-] At the end, once all edges are frozen, output the set of values $x_e$ as a fractional matching and the set of frozen vertices as a vertex cover.
\end{itemize}
\end{mdframed}
\end{minipage}

\begin{lemma}\label{lemma:ideal-alg-guarantee}
For any constants $\eps$ such that $0 < \eps \le 1/10$, the algorithm $\Local$ terminates after $O(\log n)$ iterations, at which point all edges are frozen. Moreover, we have two properties: 
\begin{itemize}
\item[(A)] The set of frozen vertices---i.e., those $v$ for which $y_{v, t}=\sum_{e \ni v}x_e \geq 1-2\eps$---is a vertex cover that has size within a $(2+5 \eps)$ factor of the minimum vertex cover.
\item[(B)] $\sum_{e\in E} x_e \geq |\Mopt|/(2+5\eps)$, that is, the computed fractional matching has size within $(2+5\eps)$-factor of the maximum matching 
\end{itemize}
\end{lemma}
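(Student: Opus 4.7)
The plan is to first argue termination, then verify that the output $\{x_e\}$ is actually a valid fractional matching (the one delicate step), and finally deduce both size bounds from a short double-counting argument.

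\emph{Termination.} I will track the weight of a single active edge: since step (B) multiplies each active $x_e$ by $1/(1-\eps)$ once per iteration, after $t$ iterations every still-active edge has $x_e = (1-\eps)^{-t}/n$. As soon as this quantity reaches $1-2\eps$, any vertex incident to a surviving active edge sees $y_v \ge 1-2\eps$ and is therefore frozen in the next (A)-step (which in turn freezes all its incident edges). Solving $(1-\eps)^{-t}/n \ge 1-2\eps$ gives $t = O(\log n/\eps) = O(\log n)$, which yields the iteration bound.

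\emph{Validity of the fractional matching ($y_v \le 1$).} I expect this to be the main obstacle, because the freezing rule is stated with a one-sided threshold $1-2\eps$ and I need to bound the possible overshoot. For a vertex that never freezes, $y_v < 1-2\eps < 1$ throughout. For a vertex $v$ that freezes in iteration $t$, I will use that at the start of iteration $t-1$ the test failed, so at that moment $y_v < 1-2\eps$; between then and the freezing check at iteration $t$, the only change is the single multiplicative update of step (B) in iteration $t-1$, which scales each active contribution by $1/(1-\eps)$ and leaves frozen contributions intact. Hence at freezing time $y_v \le (1-2\eps)/(1-\eps) \le 1$, and from then on $y_v$ does not change (all edges incident to $v$ are frozen). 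Thus $y_v \le 1$ for every $v$, so $\{x_e\}$ is a feasible fractional matching.

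\emph{Approximation bounds.} Let $F$ denote the set of frozen vertices. Every frozen edge has at least one frozen endpoint (an edge freezes only when an endpoint does), so $F$ is a vertex cover. Double counting gives
\[
  (1-2\eps)|F| \;\le\; \sum_{v\in F} y_v \;\le\; \sum_v y_v \;=\; 2\sum_e x_e.
\]
For any vertex cover $C^\star$, since each edge has an endpoint in $C^\star$ and $y_v\le 1$, I get $\sum_e x_e \le \sum_{v\in C^\star} y_v \le |VC^\star|$. Plugging in, $|F| \le \tfrac{2|VC^\star|}{1-2\eps}$, and a short calculation shows $\tfrac{2}{1-2\eps} \le 2+5\eps$ for $\eps\le 1/10$, proving (A). For (B), since $F$ is a vertex cover we have $|\Mopt| \le |F|$, and chaining this with the double-counting inequality yields $\sum_e x_e \ge \tfrac{(1-2\eps)|\Mopt|}{2}$; the same elementary inequality $(1-2\eps)(2+5\eps)\ge 2$ (valid for $\eps\le 1/10$) converts this into $\sum_e x_e \ge |\Mopt|/(2+5\eps)$. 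Aside from the validity step above, everything else is routine LP/duality bookkeeping.
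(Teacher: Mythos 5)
Your proof is correct and follows essentially the same approach as the paper: the double-counting inequality $(1-2\eps)|F| \le 2\sum_e x_e$ combined with LP weak duality. You go further in two useful places that the paper asserts without proof—the termination bound and, more substantively, the feasibility invariant $y_v \le 1$ (via the clean observation that a vertex freezing at iteration $t$ had $y_v < 1-2\eps$ one step earlier, and a single multiplicative update can push it to at most $(1-2\eps)/(1-\eps) < 1$)—and your arithmetic $2/(1-2\eps)\le 2+5\eps$ for $\eps\le 1/10$ actually matches the stated constant exactly, whereas the paper's in-proof bound $2(1+5\eps)$ is slightly looser.
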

\begin{proof}
We first prove the claim about vertex cover, and then about maximum matching.

\paragraph{Vertex cover}
	Let $C$ be the vertex cover obtained by the algorithm. Every vertex added to $C$ has weight at least $1 - 2 \eps$. Furthermore, an edge can be incident to at most $2$ vertices of $C$. Let $W_M$ be the weight of the fractional matching the algorithm constructs. Then, we have $|C| \le 2 W_M / (1 - 2\eps) \le 2 (1 + 5 \eps) W_M$, for $\eps \le 1/10$. Note that the algorithm ensures that at every step $y_v \le 1$. Hence, from strong duality we have that the weight of fractional minimum vertex cover is at least $W_M$. Therefore, the minimum (integral) vertex cover has size at least $W_M$ as well. This now implies that $|C|$ is a $2 (1 + 5\eps)$-approximate minimum vertex cover.

\paragraph{Maximum matching}
	Let $W_M^{\star}$ be the weight of a fractional maximum matching. Then, it holds $|\Mopt| \le W_M^{\star} \le |C|$. From our analysis above and the last chain of inequalities we have $W_M \ge |\Mopt| / (2 (1 + 5\eps))$.
\end{proof}

\subsection{An Attempt for Simulation in $O(\log \log n)$ rounds of MPC}
\label{sec:fractional-LP-in-MPC}

\paragraph{An Idealized MPC Simulation} Next, we describe an attempt toward simulating the algorithm $\Local$ in the MPC model. Once we describe this, we will point out some shortcomings and then explain how one can adjust the algorithm to address these shortcomings.
  
The algorithm starts with every vertex and every edge being active. If not active, an edge/vertex is frozen. Throughout the algorithm, the minimum active fractional edge value increases and consequently, the degree of each vertex with respect to active edges decreases gradually. We break the simulation into phases, where the $i^{th}$ phase ensures to simulate enough of the algorithm until the minimum active fractional edge value is $1/\Delta^{-(0.9)^i}$, which implies that the active degree is at most $\Delta^{(0.9)^{i}}$. Hence, we finish within $O(\log\log n)$ phases. \textbf{Remark:} In our final implementation, the number of iteration one phase simulates is slightly different than presented here. However, that final implementation, that we precisely define in the sequel, follows the exact same behavior as presented here.

Let us focus on one phase. Suppose that $G'$ is the remaining graph on the active edges, the minimum active fractional edge value is $1/d$, and thus $G'$ has degree at most $d$. In this phase, we simulate the algorithm until the minimum active fractional edge value reaches $1/d^{0.9}$, which implies that the active degree is at most $d^{0.9}$.

We randomly partition the vertex-set of $G'$, that consists only of active edges, among $\machines = \sqrt{d}$ machines; let $G'_i$ be the graph given to machine $i$. In this way, each machine receives $O(n)$ edges w.h.p. Machine $i$ for the next $\log_{1/(1-\eps)} d/10$ rounds simulates the basic algorithm on $G'_i$. For that, in each round the machine which received a vertex $v$  estimates $y_v= \sum_{e\ni v } x_e$
by $\ylocal_v$ defined as
	\[
		\ylocal_v = \machines \cdot \sum_{e \ni v;\, e\in G'_i} x_e + \sum_{e \ni v; \, e\in G\setminus G'} x_e.
	\]
	That is, $\ylocal_v$ is the summation of edge-values of $G'$-edges incident on $v$ whose other endpoint is in the same machine, multiplied by $\machines$ (to normalize for the partitioning), plus the value of all edges remaining from $G\setminus G'$, i.e., edges that were frozen before this phase. In each round and for every vertex $v$, if $\ty_v \geq 1-2 \eps$, then the machine freezes $v$ and the edges incident to $v$. After this step, for any active edge $e \in G'_i$ the machine sets $x_e\gets x_e \cdot 1/(1-\eps)$. The phase ends after $\log_{1/(1-\eps)} \Delta/10$ rounds. At the end, the round in which different vertices were frozen determines when the corresponding edges got frozen (if they did). So, it suffices to spread the information about the frozen vertices and the related timing to deduce the edge-values of all edges. Since per iteration each active edge increases by a factor of $1/(1-\eps)$, after $\log_{1/(1-\eps)} \Delta/10$ rounds, the minimum active edge value reaches $1/d^{0.9}$ and we are done with this phase.

\paragraph{The Issue with the Direct Simulation}
Consider first the following wishful-thinking scenario. Assume for a moment that in \emph{every iteration} it holds $\abs{\yideal_v - (1 - 2 \eps)} > \abs{\yideal_v - \ylocal_v}$, that is, $\yideal_v$ and $\ylocal_v$ are "on the same side" of the threshold. Then, the algorithm $\Local$ and the MPC simulation of it make the same decision on whether a vertex $v$ gets frozen or not. Moreover, this happens in every iteration, as can be formalized by a simple induction. This in turn implies that the MPC algorithm performs the exact same computations as the $\Local$ algorithm and thus it provides the same approximation as $\Local$. However, in general case, even if $\yideal_v$ and $\ylocal_v$ are almost equal, e.g., $\abs{\yideal_v - \ylocal_v} \ll \eps$, it might happen that $\yideal_v \ge 1 - 2 \eps$ and $\ylocal_v < 1 - 2 \eps$, resulting in the two algorithms making different decisions with respect to $v$. Furthermore, this situation could occur for many vertices simultaneously, and this deviation of the two algorithms might grow as we go through the round; these complicate the task of analyzing the behavior of the MPC algorithm. 

\paragraph{Random Thresholding to the Rescue}
Observe that  if $\abs{\yideal_v - \ylocal_v}$ is small then there is only a ``small range" of values of $\yideal_v$ around the threshold $1 - 2 \eps$ which could potentially lead to the two algorithms behaving differently with respect to $v$. Motivated by this observation, instead of having one fixed threshold throughout the whole algorithm, in each iteration $t$ and for each vertex $v$ the algorithm will uniformly at random choose a fresh threshold $\cT_{v, t}$ from the interval $[1 - 4\eps, 1 - 2 \eps]$. We call this algorithm $\LocalRand$, and state it below. Then, if $v$ is not frozen until the $t^{th}$ iteration, $v$ gets frozen by $\LocalRand$ if $\yideal_{v, t} \ge \cT_{v, t}$ (and similarly, $v$ get frozen by the MPC simulation if $\ylocal_{v, t} \ge \cT_{v, t}$). In that case, if $\abs{\yideal_v - \ylocal_v} \ll \eps$, then most of the time $\yideal_v$ would be far from the threshold and the two algorithms would behave similarly. We make this intuition formal in the next section by \cref{claim:prob-to-differ}. 

\subsection{Our Actual Simulation in $O(\log \log n)$ rounds of MPC}
\label{sec:actual-simulation-in-MPC}
We now present the modified $\LocalRand$ algorithm with the random thresholding and then discuss how we simulate it in the MPC model. 

\smallskip
\begin{minipage}{0.95\linewidth}
\begin{mdframed}[hidealllines=true, backgroundcolor=gray!15]
\vspace{-3pt}
\paragraph{$\LocalRand$: Centralized $O(\log n)$-round Fractional Matching and Vertex Cover with Random Thresholding} 
\begin{itemize}
\item[-] \different{Each vertex $v$ chooses a list of thresholds $\cT_{v, t}$ such that: the thresholds are chosen independently; each threshold is chosen uniformly at random from $[1 - 4 \eps, 1 - 2 \eps]$.}
\item[-] Initially, for each edge $e \in E$, set $x_e=1/n$. 
\item[-] Then, until each edge is frozen, in iteration $t$: 
\begin{itemize}
\item[(A)] Freeze each vertex $v$ for which $y_{v, t}=\sum_{e \ni v}x_e \geq \different{\cT_{v, t}}$ and freeze all its edges.
\item[(B)] For each active edge, set $x_e\gets x_e / (1-\eps)$. 
\end{itemize}
\item[-] At the end, once all edges are frozen, output the set of values $x_e$ as a fractional matching and the set of frozen vertices as a vertex cover.
\end{itemize}
\end{mdframed}
\end{minipage}
\medskip

\paragraph{Our Actual MPC Simulation}
We now provide an MPC simulation of $\LocalRand$, that we will refer to by $\MPCsimul$, and discuss it below.

Our algorithm begins by selecting a collection of random thresholds $\cT$. In the actual implementation, since these thresholds are chosen independently and each from the same interval, threshold $\cT_{v, t}$ can be sampled when needed ("on the fly"). During the simulation, we maintain a vertex set $V' \subseteq V$ that consists of vertices that we consider for the rest of the simulation. The algorithm defines the initial weight of the edges to be $w_0 = (1 - 2\eps) / n$. Also, it maintains variable $d$ representing the upper-bound on the maximum degree in the remaining graph (in principle, the maximum degree can be smaller than $d$).

$\MPCsimul$ is divided into phases. At the beginning of a phase, we consider a subgraph $G'$ of $G[V']$ that consists only of the active edges. In \cref{lemma:bound-degree-in-G'} we prove that the maximum degree in $G'$ is at most $d$.
Also at the beginning of a phase, the algorithm defines $\yold_v$ (see \cref{line:define-yold}). This is part of the vertex-weight that remains the same throughout the execution of the phase. It corresponds to the sum of weights of the edges incident to $v$ that were 
frozen in prior phases.
Then, the vertex set $V'$ is distributed across $\machines = \sqrt{d}$ machines. Each machines collects the induced graph of $G'$ on the vertex set assigned to it. In \cref{lemma:size-of-induced-graphs} we prove that each of these induced graphs consists of $O(n)$ edges. 

	Each phase executes the steps under \cref{line:phase}, which simulates $I$ iterations of $\LocalRand$. During a phase, we maintain the iteration-counter $t$. The value of $t$ counts all the iterations since the beginning of the algorithm, and not only from the beginning of a phase. After this simulation is over, the weight $\xMPC_e$ of each edge $e$ is properly set/updated. For instance, if $e$ was not assigned to any of the machines (i.e., its endpoints were assigned to distinct machines), then $\xMPC_e$ was not changing during the simulation of $\LocalRand$ in this phase even if both of its endpoints were active. To account for that, at \cref{line:account-for-distinct-machines} the value $\xMPC_e$ is set to $w_0 \tfrac{1}{(1 - \eps)^{t'}}$, where $t'$ is the last iteration when both endpoints of $e$ were active. To implement this step, each vertex will also keep a variable corresponding to the iteration when it was last active.
	
		Every vertex $v$ that has weight more than $1$, i.e., $\yMPC_v > 1$, is along with its incident edges removed from the consideration, e.g., removed from $V'$ at \cref{line:remove-from-V'}, but $v$ is added to the vertex cover that is reported at the end of the algorithm. Note that after the removal of such $v$, the edges incident to it are not considered anymore while computing $\yMPC$ or $\ylocal$. This step ensures that throughout the algorithm the fractional matching on $G[V']$ will be valid. But it also ensures that all the edges that are in $G[V \setminus V']$, in particular those incident to $v$, will be covered by the final vertex cover.

\smallskip
\begin{minipage}{0.95\linewidth}
\begin{mdframed}[hidealllines=true, backgroundcolor=gray!15]
\vspace{-3pt}
\paragraph{$\MPCsimul$: MPC Simulation of algorithm $\LocalRand$}
\begin{enumerate}[(1)]
	\item Each vertex $v$ chooses a list of thresholds $\cT_{v, t}$ such that: the thresholds are chosen independently; each threshold is chosen uniformly at random from $[1 - 4 \eps, 1 - 2 \eps]$.
	\item Init: $V' = V$; \enskip $\forall e \in E$, set $\xMPC_e = w_0 = \tfrac{1 - 2\eps}{n}$; \enskip $d = n$; \enskip $t = 0$.
	\item While $d > \log^{20}{n}$: 
	\begin{enumerate}[(a)]
		\item\label{line:define-G'} Let $G'$ be a graph on $V'$ consisting only of the active edges of $G[V']$.
		\item\label{line:define-yold} For each $v \in V'$, define $\yold_v = \sum_{e \ni v; \, e \in G[V'] \setminus G'} \xMPC_e$.
		\item\label{line:define-init-for-phase} Set: \# machines $\machines = \sqrt{d}$; \enskip \# iterations $I = \tfrac{\log{\machines}}{10 \log{10}}$.
		\item\label{line:define-Vi} Partition $V'$ into $\machines$ sets $V_1, \ldots, V_{\machines}$ by assigning each vertex to a machine independently and uniformly at random.
		\item\label{line:phase} For each $i \in \{1, \ldots, \machines\}$ in parallel execute $I$ iterations
			\begin{enumerate}[(A)]
				\item\label{line:compute-ylocal} For each $v \in V_i$ such that $\ylocal_{v, t} = \machines \cdot \sum_{e \ni v;\, e \in G'[V_i]}\xMPC_e + \yold_v \geq \cT_{v, t}$: freeze $v$ and freeze all its edges.
				\item For each active edge of $G'[V_i]$, set $\xMPC_e \gets \tfrac{\xMPC_e}{1-\eps}$.
				\item\label{line:increase-active-weight} Increment the total iteration count: $t \gets t + 1$.
			\end{enumerate}
		\item\label{line:update-degree} Update $d \gets d (1 - \eps)^I$.
		\item\label{line:account-for-distinct-machines} For every edge $e = \{u, v\}$: set $\xMPC_e = w_0 \tfrac{1}{(1 - \eps)^{t'}}$, where $t'$ is the last iteration in which both $u$ and $v$ were active.
		\item For each $v \in V'$ let $\yMPC_v = \sum_{e \ni v;\, e \in G[V']} \xMPC_e$.
		\item\label{line:remove-from-V'} For each $v \in V'$ such that $\yMPC_v > 1$: remove $v$ from $V'$.
		\item\label{line:remove-more-1-2eps} For each $v \in V'$ such that $\yMPC_v > 1 - 2 \eps$: freeze $v$ and freeze all its edges.
	\end{enumerate}
	\item\label{line:directly-simulate} Directly simulate $\log_{1 / (1 - \eps)} \log^{20}{n}$ iterations of $\LocalRand$.
	\item Output the vector $\xMPC$ as a fractional matching and the set of frozen vertices as a vertex cover.
\end{enumerate}
\end{mdframed}
\end{minipage}
\smallskip

	If some vertex has weight between $1 - 2 \eps$ and $1$, it has sufficiently large fractional weight, so we simply freeze it (\cref{line:remove-more-1-2eps}) before the next phase.

	Once the upper-bound $d$ becomes less than $\log^{20}{n}$, the algorithm exits from the main while loop, and the rest of the iterations needed to simulate $\LocalRand$ are executed one by one. During this part of the simulation, $\MPCsimul$ and $\LocalRand$ behave identically.


\subsection{Analysis}
\label{sec:analysis-matching}
We prove that the set of frozen vertices forms a $2+O(\eps)$ approximation of the minimum vertex cover, and the computed fractional matching is a $2+O(\eps)$ approximation of maximum matching.

\begin{lemma}\label{lemma:main-lemma-matching}
	$\MPCsimul$ with high probability outputs a $(2 + 50 \cdot \eps)$-approximate minimum vertex cover and a fractional matching which is a $(2 + 50 \cdot \eps)$ approximation of maximum matching. Moreover, there is an implementation of $\MPCsimul$ that with high probability has $O(\log \log{n})$ MPC-round complexity and requires $O(n)$ space per machine.
	
	Furthermore, the algorithm outputs fractional matching $x$ and a vertex cover $C$ such that the fractional weight of at least $|C| / 3$ vertices of $C$ is at least $1 - 5 \eps$.
\end{lemma}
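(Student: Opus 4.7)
The plan is to establish four things in order: the $O(\log\log n)$ round complexity, the $O(n)$ per-machine memory bound, the $(2+50\eps)$-approximation guarantees, and the fact that at least $|C|/3$ of the vertices in the output cover have fractional weight at least $1-5\eps$. For the round complexity, I would first observe that the while-loop of $\MPCsimul$ terminates in $O(\log\log n)$ iterations: each phase multiplies the degree bound $d$ by $(1-\eps)^I$ with $I=\Theta(\log d)$, so $\log d$ shrinks by a constant factor (depending on $\eps$) per phase, reaching $20\log\log n$ after $O(\log\log n)$ phases, and the final direct-simulation step at line \cref{line:directly-simulate} adds at most $O(\log\log n)$ more iterations. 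For the memory, an auxiliary lemma bounds the maximum degree in $G'$ by $d$ (since every active edge weight is at least $1/d$ and no vertex-weight exceeds $1$); then with $\machines=\sqrt{d}$ a random vertex partition places each edge of $G'$ into some $G'[V_i]$ with probability $1/\machines^2$, so $\E{|G'[V_i]|}\le nd/(2\machines^2)=n/2$ and a Chernoff bound gives $|G'[V_i]|=O(n)$ w.h.p. Each phase is then implementable in $O(1)$ MPC rounds by shipping $G'[V_i]$ to machine $i$, simulating the $I$ inner iterations locally, and broadcasting the resulting freeze decisions and timestamps.

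The heart of the proof is a coupling of $\MPCsimul$ with a parallel run of $\LocalRand$ on the same graph, using the same thresholds $\cT_{v,t}$ and the same random partition $V_1,\dots,V_\machines$, in order to show that the two algorithms make the same freeze decisions with high probability. I would induct on the iteration counter $t$: conditioned on all previous decisions agreeing, $\yMPC_{v,t}$ coincides with the ideal $\yideal_{v,t}$, and $\ylocal_{v,t}$ is an unbiased estimator of it since each neighbor of $v\in V_i$ lies in $V_i$ independently with probability $1/\machines$. A Chernoff bound (\cref{lemma:chernoff}), exploiting that every active edge weight is at most $1/d$, then yields $|\ylocal_{v,t}-\yideal_{v,t}|\le \eps^2$ with probability $1-n^{-\Omega(1)}$ whenever the active mass of $v$ exceeds some $\polylog(n)^{-1}$; otherwise the threshold range $[1-4\eps,1-2\eps]$ cannot be crossed in either algorithm. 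Conditioned on this concentration, the independent uniform threshold $\cT_{v,t}$ (an interval of length $2\eps$) separates $\ylocal_{v,t}$ from $\yideal_{v,t}$ with probability at most $\eps^2/(2\eps)=\eps/2$. The main obstacle is that a naive union bound over all (vertex, iteration) pairs does not give vanishing total error. To resolve this, I would argue that any single disagreement is self-healing within $O(1/\eps)$ subsequent iterations, because on any still-active edge the weight grows by a factor $1/(1-\eps)$ per iteration, which quickly pushes both $\ylocal_{v,t}$ and $\yideal_{v,t}$ strictly above the threshold. Consequently, only a negligible fraction of vertices ever behave differently across the two runs, and the guarantees of $\LocalRand$ transfer to $\MPCsimul$ up to an additive $O(\eps)$ loss.

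Given this coupling, the approximation analysis of \cref{lemma:ideal-alg-guarantee} lifts to $\LocalRand$ essentially unchanged, except that the fixed threshold $1-2\eps$ is replaced by a random value in $[1-4\eps,1-2\eps]$, which costs at most an extra factor $1/(1-4\eps)\le 1+8\eps$ in the vertex-cover bound; this yields the claimed $(2+50\eps)$ approximations of minimum vertex cover and (fractional) maximum matching, while validity of the fractional matching is enforced by line \cref{line:remove-from-V'}, which strips any vertex with $\sum_{e\ni v}\xMPC_e>1$. For the ``furthermore'' claim, every $v\in C$ enters $C$ via one of three routes: (i) $\ylocal_{v,t}\ge \cT_{v,t}\ge 1-4\eps$ during some phase, in which case the Chernoff concentration above gives $\yMPC_v\ge 1-5\eps$; (ii) $\yMPC_v>1-2\eps$ at the end of a phase by line \cref{line:remove-more-1-2eps}; or (iii) $\yMPC_v>1$ by line \cref{line:remove-from-V'}. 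In each case the final fractional weight of $v$ is at least $1-5\eps$ (weights only increase over the course of the algorithm), so a union bound over the polynomially many Chernoff events shows that all but negligibly many vertices of $C$ satisfy the bound with probability $1-n^{-\Omega(1)}$, making the ``$|C|/3$'' statement a comfortable slack.
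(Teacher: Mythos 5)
Your outline of the round-complexity and memory bounds matches the paper's structure (phases with $d\to d(1-\eps)^I$, $I=\Theta(\log d)$, so $O(\log\log n)$ phases; $\machines=\sqrt d$ machines each holding $O(n)$ edges). One small caveat: your Chernoff argument on $|E(G'[V_i])|$ treats edges as independently-retained, but two edges sharing a vertex are not independent events under vertex partitioning; the paper instead bounds $|V_i|$ and the per-vertex degree in $G'[V_i]$ separately (\cref{lemma:size-of-induced-graphs}), which is the clean way to make this rigorous.

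The more serious gap is in the coupling argument. Your proposal replaces the paper's error-propagation lemma by a ``self-healing within $O(1/\eps)$ iterations'' heuristic, and this does not work. Once a vertex $u$ disagrees between $\MPCsimul$ and $\LocalRand$ (is frozen in one and not the other), the disagreement is irrevocable for the rest of the phase, and it does not ``heal'': a frozen vertex's weight stops increasing, so the two runs diverge for $u$ permanently, and --- crucially --- $u$'s neighbors now see a discrepancy in their active degree, which can push \emph{them} across their own thresholds and make them bad too. The paper's \cref{claim:change-of-sigma} is precisely the mechanism that controls this cascade: it shows the deviation $\sigma_t=\abs{\yideal_{v,t}-\ylocal_{v,t}}$ evolves as $\sigma_t\le 4(\sigma_{t-1}+\eps\machines^{-0.2})$, i.e.\ grows roughly geometrically with base $5$, and then the phase length $I=\tfrac{\log\machines}{10\log 5}$ is deliberately chosen so that $5^I\machines^{-0.2}\le\machines^{-0.1}\ll\eps$. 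That choice of $I$ is the load-bearing quantitative step of the whole analysis, and your proposal has nothing playing that role: a union bound over $\poly(n)$ per-iteration Chernoff events gives you concentration assuming no bad vertices, but it does not bound the accumulated effect of bad vertices over an $\Omega(\log\machines)$-iteration phase.

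The ``$|C|/3$'' claim also has an error in case (iii). If $\yMPC_v>1$, the algorithm removes $v$ from $V'$ together with its incident edges (\cref{line:remove-from-V'}), so $v$ contributes \emph{zero} to the output fractional matching, not weight $\ge 1-5\eps$; and worse, removing those edges can pull down $\yMPC_u$ for neighbors $u\in\tC$, which your lower bounds on $\yMPC_u$ implicitly rely on not happening. The paper handles this by showing heavy-bad vertices are a subset of the late-bad vertices (those frozen in $\MPCsimul$ strictly later than in $\LocalRand$), bounding $\nlate\le\tfrac{\eps}{1-\eps}|\tC|$ via \cref{claim:prob-to-differ}, and then charging both the missing cover weight and the deleted edge mass against $\nlate$ in the final inequality \eqref{eq:cT-bound}. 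Without that accounting, the claim that ``all but negligibly many'' vertices of $C$ have weight $\ge 1-5\eps$ is not established.
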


\paragraph{Remark}
	For technical reasons and for the sake of clarity of our exposition, in our analysis we assume that $\eps < 1/50$. (If the input $\eps \ge 1/50$, we simply reduce its value and deliver even better approximation than required.) Also, as $\eps$ is a constant, we assume that $\eps > 1 / \log{n}$.

\paragraph{Roadmap} We split the proof of \cref{lemma:main-lemma-matching} into three parts. We start by, in \cref{sec:simple-weight-properties}, showing some properties of the edge-weights and the maximum degree of vertices during the course of $\MPCsimul$. Then, in \cref{sec:memory-requirement} we prove that $O(n)$ space per machine suffices for the execution of $\MPCsimul$, and that the algorithm can be executed in $O(\log \log{n})$ MPC-rounds. Next, in \cref{sec:weight-properties} we relate the vertex-weights in $\MPCsimul$ (i.e., the vectors $\ylocal$ and $\yMPC$) to the corresponding weights in the algorithm $\LocalRand$ (i.e., to the vector $\yideal$). namely, we trace $\abs{\ylocal_v - \yideal_v}$ over the course of one phase, and show that for most of the vertices this difference remains small. We put forth those results in \cref{sec:proof-of-matching-lemma} and prove \cref{lemma:main-lemma-matching}.

\subsubsection{Weight and degree properties}
\label{sec:simple-weight-properties}
	We now state several properties of edge-weights that are easily derived from the algorithm, and provide an upper-bound on the maximal active degree of any vertex. These properties will be used throughout our proofs in the coming sections. 
	
	Define $w_t = w_0 \tfrac{1}{(1 - \eps)^t}$. Observe that at the $t^{th}$ iteration, the weight of all the active edges that are on some of the machines equals $w_t$. Furthermore, if for an edge $e = \{u, v\}$ such that $u$ and $v$ are on different machines, vertices $u$ and $v$ are both active in the $t^{th}$ iteration, then after the phase ends the weight $\xMPC_e$ will be set to at least $w_t$ (see \cref{line:account-for-distinct-machines}). We next state two observations.
	
	\begin{observation}[Degree --- active-weight invariant]\label{obs:degree-active-weight}
		Consider an iteration $t$ at which is updated $d$ at \cref{line:update-degree} of $\MPCsimul$. Then, just after the degree $d$ is updated, it holds $d \cdot w_t = 1 - 2 \eps$.
	\end{observation}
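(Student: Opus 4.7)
The plan is a one-line induction on the number of completed phases, tracking how $t$ and $d$ change together so that the product $d \cdot w_t$ is a loop invariant.

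First I would recall the relevant bookkeeping from the algorithm: initialization sets $d = n$, $t = 0$, and $w_0 = (1 - 2\eps)/n$; inside each phase the counter $t$ is incremented by $1$ at \cref{line:increase-active-weight} for each of the $I$ iterations; and at \cref{line:update-degree} we set $d \gets d (1 - \eps)^I$. Since by definition $w_t = w_0 / (1 - \eps)^t$, after a phase that started with counter $t_0$ and ended with $t = t_0 + I$, we have $w_t = w_{t_0} / (1 - \eps)^I$.

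For the base case (i.e., the first execution of \cref{line:update-degree}), just after this update we have $t = I$ and $d = n (1 - \eps)^I$, so
\[
d \cdot w_t = n (1 - \eps)^I \cdot \frac{1 - 2\eps}{n (1 - \eps)^I} = 1 - 2\eps.
\]
For the inductive step, suppose that at the end of the previous phase (with counter $t_0$) we had $d_0 \cdot w_{t_0} = 1 - 2\eps$. At the end of the current phase, $t = t_0 + I$ and $d = d_0 (1 - \eps)^I$, so
\[
d \cdot w_t = d_0 (1 - \eps)^I \cdot \frac{w_{t_0}}{(1 - \eps)^I} = d_0 \cdot w_{t_0} = 1 - 2\eps,
\]
which closes the induction.

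There is no real obstacle here: the factors $(1 - \eps)^I$ coming from the multiplicative update of $d$ in \cref{line:update-degree} and from the $I$ reweightings $\xMPC_e \gets \xMPC_e / (1-\eps)$ within the phase cancel exactly, and the invariant holds with the initial value $1 - 2\eps$ fixed by the choice $w_0 = (1 - 2\eps)/n$ together with $d = n$. The only thing to be a bit careful about is to make clear that the observation concerns the state immediately after \cref{line:update-degree}, before any of the subsequent steps of the phase (\cref{line:account-for-distinct-machines}, \cref{line:remove-from-V'}, \cref{line:remove-more-1-2eps}) modify $\xMPC$ or $V'$, since those steps do not touch $d$ or $t$ and so do not affect the invariant anyway.
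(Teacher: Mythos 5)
Your proof is correct and is essentially the paper's argument: the paper also notes that $w_0 \cdot d = 1-2\eps$ initially and that over a phase the edge weight gains a factor $1/(1-\eps)^I$ while $d$ loses the same factor, so their product is invariant. You simply spell out the resulting induction on phases a bit more explicitly.
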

	\begin{proof}
		At the beginning of the algorithm, it holds $w_0 \cdot d = 1 - 2\eps$. Over a phase, weights of the active edges increase by $1 / (1 - \eps)^I$. On the other hand, the degree $d$ decreases by $1 / (1 - \eps)^I$. Hence, their product remains the same after every phase.
	\end{proof}
	
\begin{observation}[Maximum weight of active-edge]\label{obs:max-active-weight}
	The weight of any active edge at the beginning of a phase is $(1 - 2\eps) / \machines^2$, where $\machines$ is the number of machines used in that phase. During that phase, the weight of any edge is at most $1 / \machines^{1.8}$.
\end{observation}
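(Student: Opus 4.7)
The plan is to derive both statements directly from \cref{obs:degree-active-weight} together with the phase bookkeeping of $\MPCsimul$.

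For the first claim, when a new phase begins, either the algorithm has just executed \cref{line:update-degree} of the previous phase --- in which case \cref{obs:degree-active-weight} gives $d \cdot w_t = 1 - 2\eps$ --- or it is the very first phase, where $d = n$ and $w_0 = (1 - 2\eps)/n$ satisfy the same identity. In addition, \cref{line:account-for-distinct-machines} of the prior phase realigns the weight of every still-active cross-machine edge to the common value held by intra-machine active edges, so all currently active edges truly share the weight $w_t$. Substituting $d = \machines^2$ then gives $w_t = (1 - 2\eps)/\machines^2$, as claimed.

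For the second claim, I would bound how much a single edge's weight can grow during the $I$ iterations of \cref{line:phase}. Within the loop, an active intra-machine edge is multiplied by $1/(1-\eps)$ in step (B) at most once per iteration, while a cross-machine edge is untouched and is only reset at \cref{line:account-for-distinct-machines} to $w_0/(1-\eps)^{t'}$ with $t'$ at most the iteration count at phase end. In either case, the weight of any edge is bounded by $w_t (1-\eps)^{-I}$ throughout the phase. Combining $I = \log\machines/(10\log 5)$ with the standard inequality $-\ln(1-\eps) \leq 2\eps$ valid for $\eps \leq 1/2$, I obtain
\[
(1-\eps)^{-I} \;\leq\; \exp(2 \eps I) \;\leq\; \machines^{0.2},
\]
where the last inequality uses the standing assumption $\eps < 1/50$ with plenty of slack. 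Combined with the first part this yields $w_t (1-\eps)^{-I} \leq (1-2\eps)\,\machines^{-1.8} \leq 1/\machines^{1.8}$, as desired.

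The only point requiring real care --- rather than a genuine obstacle --- is justifying that at the phase start every active edge carries the same weight $w_t$; this is precisely what \cref{line:account-for-distinct-machines} of the previous phase accomplishes (by synchronizing the weight of surviving cross-machine edges to the value intra-machine active edges accumulated over that phase), and once it is granted the remaining calculation is routine.
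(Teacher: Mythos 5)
Your proof is correct and follows essentially the same route as the paper's: the first claim comes from combining \cref{obs:degree-active-weight} with $\machines^2 = d$, and the second from bounding the multiplicative growth $(1-\eps)^{-I}$ by $\machines^{0.2}$ using $I = (\log\machines)/(10\log 5)$. The only cosmetic difference is that the paper bounds the per-iteration growth factor crudely by $1/(1-\eps) \le 2$ and concludes $2^I \le \machines^{0.2}$, whereas you use the sharper $(1-\eps)^{-I} \le \exp(2\eps I) \le \machines^{0.2}$; both work. Your explicit remark that \cref{line:account-for-distinct-machines} of the prior phase synchronizes cross-machine edge weights to the common value $w_t$ is a worthwhile clarification that the paper leaves implicit, but it does not change the structure of the argument.
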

\begin{proof}
	Let $w_{\tstar}$ be the weight of any active edge at the beginning of a phase. As defined at \cref{line:define-init-for-phase}, we have $\machines^2 = d$. From \cref{obs:degree-active-weight} we hence conclude that $w_{\tstar} = (1 - 2\eps) / \machines^2$.
	
	Also at \cref{line:define-init-for-phase}, $I$ is defined to be $(\log{\machines})(10 \log{10}) < (\log{\machines}) / 10$. On the other hand, for at most $I$ iterations the weight of any active edge is increased by at most $1 / (1 - \eps) \le 2$ at \cref{line:increase-active-weight}. Hence
		\[
			w_{\tstar + I} \le 2^I \cdot (1 - 2 \eps) / \machines^2 \le (1 - 2\eps) \machines^{0.2} / \machines^2 \le \machines^{1.8}.
		\]
\end{proof}

\subsubsection{Memory requirement and round complexity}
\label{sec:memory-requirement}
In this section, we first show that $O(n)$ space per machines suffices to store the induced graphs $G'[V_i]$ considered by $\MPCsimul$ (see \cref{lemma:size-of-induced-graphs}). After, in \cref{lemma:number-of-phases}, we upper-bound the number of phases of $\MPCsimul$. At the end of the section, we combine these together to prove the following lemma.
\begin{lemma}\label{lemma:memory-and-round-complexity}
	There is an implementation of $\MPCsimul$ that requires $O(n)$ memory per machine and executes $O(\log \log{n})$ MPC rounds w.h.p.
\end{lemma}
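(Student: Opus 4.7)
The plan is to prove the two halves of the lemma in parallel: that each machine stores $O(n)$ data throughout, and that the main \textbf{while} loop together with the final direct simulation (\cref{line:directly-simulate}) together take $O(\log\log n)$ MPC rounds. The first part reduces to bounding the number of edges in each induced subgraph $G'[V_i]$ that a single machine has to hold, and the second to counting how many phases are needed before $d$ drops below $\log^{20} n$, plus the cost of the tail simulation.

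For the memory bound, the key input is \cref{lemma:bound-degree-in-G'}, which asserts that at the start of each phase the maximum degree of $G'$ is at most $d$. Since the partition at \cref{line:define-Vi} distributes $V'$ among $\machines = \sqrt{d}$ machines independently and uniformly at random, for any fixed vertex $v$ the degree of $v$ inside $G'[V_i]$, conditioned on $v \in V_i$, is a sum of $\deg_{G'}(v) \le d$ independent Bernoullis of mean $1/\machines = 1/\sqrt{d}$, with expectation at most $\sqrt{d}$. Because we remain in the main loop only while $d > \log^{20} n$, we have $\sqrt{d} > \log^{10} n \gg \log n$, so \cref{lemma:chernoff} yields that every vertex has degree $O(\sqrt{d})$ in its $G'[V_i]$ with probability at least $1 - n^{-c}$ for any desired constant $c$. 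A second Chernoff application shows $|V_i| = O(n/\sqrt{d} + \log n)$ w.h.p. A union bound over all vertices and all machines then gives $|E(G'[V_i])| = O(n)$ w.h.p. The additional bookkeeping---the $\yold_v$ values, a per-vertex ``last-active iteration'' counter, and the thresholds $\cT_{v,t}$ sampled on the fly---adds only $O(|V_i|)$ extra words, still $O(n)$ per machine.

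For the round complexity, each phase of the main loop can be implemented in $O(1)$ MPC rounds: after the random partition and the distribution of $\yold$ and the relevant edges, each machine simulates its $I$ iterations locally with no communication, and the post-phase operations at \cref{line:account-for-distinct-machines}--\cref{line:remove-more-1-2eps} are standard aggregations that fit in $O(1)$ rounds given $\tO(n)$ memory per machine. Thus it remains to count phases. By \cref{obs:degree-active-weight}, the product $d \cdot w_t = 1-2\eps$ is preserved across phases, while the algorithm sets $d_{new} = d(1-\eps)^I$ with $I = \Theta(\log d)$ by \cref{line:define-init-for-phase}. A short calculation gives $(1-\eps)^I = d^{-\Omega(\eps)}$, so $d_{new} \le d^{1-\beta}$ for some constant $\beta = \Omega(\eps)$. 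Starting from $d = n$, the recursion $d_{k+1} \le d_k^{1-\beta}$ reaches $d_k \le \log^{20} n$ after $k = O(\log\log n / \beta) = O(\log\log n)$ phases.

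Finally, the direct simulation of $\LocalRand$ in \cref{line:directly-simulate} needs only $\log_{1/(1-\eps)} \log^{20} n = O(\log\log n)$ iterations; since the residual active graph has maximum degree at most $\log^{20} n$ and thus at most $\tO(n)$ active edges, each iteration can be executed in $O(1)$ MPC rounds by standard aggregation (the weight $y_v$ is a sum over incident edges, which fits across the machines). Adding the two contributions yields the advertised $O(\log\log n)$ bound. The main obstacle I anticipate is the concentration argument for the edges held per machine: edges sharing an endpoint are correlated through the random assignment of that endpoint, so the argument has to go through a per-vertex local-degree bound first and only then be combined with the bound on $|V_i|$, rather than trying to apply Chernoff to the edge count directly.
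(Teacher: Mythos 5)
Your proposal is correct and follows essentially the same approach as the paper's proof: the paper likewise splits the argument into a per-phase memory bound (proving $|V_i| = O(n/\machines)$ and per-vertex degree $O(\machines)$ separately via Chernoff before combining, exactly the route you take through \cref{lemma:bound-degree-in-G'}) and a phase-count bound via the recursion $d_{i+1} = d_i^{1-\gamma}$ with $\gamma = \Theta(\eps)$, then invokes standard MPC primitives for the $O(1)$-rounds-per-phase claim. The only cosmetic difference is that the paper factors these two ingredients out into separate lemmas (\cref{lemma:size-of-induced-graphs} and \cref{lemma:number-of-phases}) and cites prior work for the implementation details rather than re-deriving them.
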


	We start by upper-bounding the number of active edges incident to a vertex of $V'$.
	\begin{lemma}\label{lemma:bound-degree-in-G'}
		Let $V'$, $G'$ and $d$ be as defined in $\MPCsimul$ at the beginning of the same phase. Then, the degree of every vertex in $G'[V']$ is at most $d$.
	\end{lemma}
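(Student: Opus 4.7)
The plan is to prove this by induction on the phase index, with the invariant that at the beginning of each phase every vertex in $V'$ has at most $d$ incident active edges in $G'$. The base case is immediate: initially $V' = V$, $d = n$, and $G' \subseteq G$, so every vertex has degree at most $n-1 \le n$.

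For the inductive step, consider a phase with parameters $(V', d)$ satisfying the invariant, and analyze the parameters $(V'_{\mathrm{new}}, d_{\mathrm{new}})$ entering the following phase. By line (g), $d_{\mathrm{new}} = d(1-\eps)^I$, and \cref{obs:degree-active-weight} gives $d_{\mathrm{new}} \cdot w_t = 1 - 2\eps$, where $t$ is the iteration counter at the end of the phase. The key technical claim I would establish is that after line (h), every edge $e = \{u,v\}$ whose endpoints are both still unfrozen has $\xMPC_e = w_t$. For edges whose endpoints lie on the same machine, step (B) multiplies the weight by $1/(1-\eps)$ in each of the $I$ iterations during which the edge remained active, so the weight grows from $w_{t-I}$ to exactly $w_t$. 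Edges crossing machine boundaries are untouched during the phase, but line (h) explicitly resets them to $w_0 / (1-\eps)^{t'} = w_t$ whenever both endpoints survive through iteration $t$.

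Given this, the conclusion follows cleanly. Any $v \in V'_{\mathrm{new}}$ that is still unfrozen at the start of the next phase must satisfy $\yMPC_v \le 1 - 2\eps$, because otherwise line (k) would have frozen it. The quantity $\yMPC_v = \sum_{e \ni v,\, e \in G[V']} \xMPC_e$ computed in line (i) dominates the sum of weights of just those incident edges that remain active at the start of the next phase, each of which equals $w_t$ by the claim above. Hence the number of such active edges is at most $(1-2\eps)/w_t = d_{\mathrm{new}}$. Frozen vertices have no active incident edges in $G'$ by definition, so the bound holds for them trivially. I expect the only step with any real content to be the bookkeeping around line (h) that identifies the post-phase weight of surviving edges; this is a short case distinction and no substantive obstacle is anticipated.
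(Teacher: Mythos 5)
Your proof is correct and is essentially the paper's argument: both rely on Observation~\ref{obs:degree-active-weight} to equate $d \cdot w_t = 1-2\eps$ and on the freezing step at \cref{line:remove-more-1-2eps} to cap $\yMPC_v$ at $1-2\eps$ for any vertex that retains active edges, from which the degree bound follows immediately. The only cosmetic difference is that you frame it as induction (though the inductive hypothesis is never actually invoked) while the paper phrases the identical reasoning as a proof by contradiction.
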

	\begin{proof}
		In the beginning of the algorithm, we have that $d = n$, and hence the statement holds for the very first phase.
		
		Towards a contradiction, assume that there exists a phase and a vertex $v$ such that the degree of $v$ in $G'[V']$ is more than $d$. Let $d_v$ denote its degree. Let $\tstar$ be the first iteration of that phase. Notice that $w_{\tstar}$ was the weight of active edges at the end of the previous phase. Now by \cref{obs:degree-active-weight} we have
		\[
			w_{\tstar} \cdot d_v > w_{\tstar} \cdot d = 1 - 2\eps.
		\]
		But this now contradicts the step at \cref{line:remove-more-1-2eps} of $\MPCsimul$ after which all the edges incident to $v$ would become frozen.
	\end{proof}
	Now we prove that every induced graph processed on machine has $O(n)$ edges.
	\begin{lemma}[Size of induced graphs]\label{lemma:size-of-induced-graphs}
		Let $G'$ and $V_i$ be as defined at \cref{line:define-G'} and \cref{line:define-Vi} of $\MPCsimul$, respectively. Than, $\left|E\rb{G'[V_i]}\right| \in O(n)$ w.h.p.
	\end{lemma}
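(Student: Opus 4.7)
The plan is to combine two concentration arguments: one for the number of vertices assigned to machine $i$, and one for the per-vertex degree in the induced subgraph. Throughout, we use that the algorithm is in the main while-loop, so $d > \log^{20} n$ and hence $\machines = \sqrt{d} > \log^{10} n$, which makes all the relevant Chernoff bounds give high-probability guarantees. By \cref{lemma:bound-degree-in-G'}, every vertex of $V'$ has degree at most $d$ in $G'$.

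First I would bound $|V_i|$. Since each vertex of $V'$ is assigned to machine $i$ independently with probability $1/\machines$, we have $\E{|V_i|} = |V'|/\machines \le n/\sqrt{d}$. Because $d \le n$ (the degree never increases past the initial value), we have $n/\sqrt{d} \ge \sqrt{n} \gg \log n$, so the Chernoff bound (\cref{lemma:chernoff}, part \eqref{item:delta-at-most-1} with $\delta = 1$) yields $|V_i| \le 2n/\sqrt{d}$ with high probability.

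Second, I would control, for each vertex $v \in V'$, the quantity $N_v := |\{u \in V_i : \{u,v\} \in E(G')\}|$, i.e.\ the number of $G'$-neighbors of $v$ that land on machine $i$. Conditioning on nothing (or on the assignment of $v$ itself, which is irrelevant since we are counting its neighbors), $N_v$ is a sum of at most $\deg_{G'}(v) \le d$ independent Bernoulli$(1/\machines)$ variables, with $\E{N_v} \le d/\machines = \sqrt{d}$. Since $\sqrt{d} > \log^{10} n$, the Chernoff bound gives $N_v \le 2\sqrt{d}$ with probability at least $1 - 1/n^c$ for any constant $c$, by choosing the implicit constants appropriately. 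A union bound over the at most $n$ vertices of $V'$ shows that $N_v \le 2\sqrt{d}$ for every $v \in V'$ w.h.p.

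Finally, conditional on both high-probability events, the sum of degrees in $G'[V_i]$ is bounded by
\[
\sum_{v \in V_i} \deg_{G'[V_i]}(v) \;\le\; \sum_{v \in V_i} N_v \;\le\; |V_i| \cdot 2\sqrt{d} \;\le\; \frac{2n}{\sqrt{d}} \cdot 2\sqrt{d} \;=\; 4n,
\]
so $|E(G'[V_i])| \le 2n = O(n)$, as required. The only mild subtlety, and what I would flag as the main thing to verify carefully, is that the hypotheses of Chernoff really apply in both steps: the per-vertex degree bound hinges on $\sqrt{d} = \Omega(\log n)$, which is secured precisely by the while-loop guard $d > \log^{20} n$, while the $|V_i|$ bound hinges on $n/\sqrt{d} = \Omega(\log n)$, which follows from $d \le n$. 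Outside the while-loop the claim is not needed because the remaining iterations are simulated directly at \cref{line:directly-simulate}.
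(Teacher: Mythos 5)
Your proof is correct and takes essentially the same route as the paper's: bound $|V_i|$ by Chernoff using $\E{|V_i|}\le n/\machines$ and $n/\machines\ge\sqrt{n}$, bound per-vertex degree by Chernoff using the degree bound from \cref{lemma:bound-degree-in-G'} and $d/\machines=\machines\ge\log^{10}n$, then union-bound and multiply. Your variant of bounding $N_v$ over all $v\in V'$ rather than over $v\in V_i$ is a cosmetic simplification that sidesteps any worry about conditioning; the one small thing I'd tighten is the $|V_i|$ step, where the paper's additive form $|V_i|\le |V'|/\machines+n/\machines$ is safer than invoking Chernoff with $\delta=1$ directly, since the actual mean $|V'|/\machines$ could be much smaller than $n/\machines$ (in which case a larger $\delta$ is needed, though the conclusion still holds).
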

	\begin{proof}
		We split the proof into two parts. First, we argue that the size of $V_i$ is $O(n / m)$ w.h.p. After, we argue that the degree of each vertex in $G'[V_i]$ is $O(d / m)$ w.h.p., after which the proof will follow by union bound.
		
		\paragraph{Expected size of $V_i$} Now, $\E{|V_i|} = |V'| / m \le n / m$. Observe that we have $m \le \sqrt{n}$ at any phase, and hence $n / m \ge \sqrt{n}$. Now Chernoff bound (\cref{lemma:chernoff}) implies that inequality
		\begin{equation}\label{eq:Vi-bound}
			|V_i| \le |V'| / m + n / m \in O(n / m)
		\end{equation}
		holds w.h.p.
		
		\paragraph{Degree bound}
			Consider a vertex $v \in V_i$, and let $d_v$ be its degree in $V_i$. \cref{lemma:bound-degree-in-G'} implies $\E{d_v} \le d / m$. By the definition it holds $d / m = m$, and also $m \ge \log^{10}{n}$. Now again by applying Chernoff bound, we conclude that
			\begin{equation}\label{eq:dv-bound}
				\E{d_v} \le d/m + m \in O(m)
			\end{equation}
			holds w.h.p.
			
		\paragraph{Combining the bounds}
			Since \cref{eq:Vi-bound} and \cref{eq:dv-bound} hold independently and w.h.p., by taking union bound over all the vertices we conclude that the number of the edges in $G'[V_i]$ is bounded by $O((n / m) \cdot m)$ w.h.p. This concludes the proof.
	\end{proof}

	\begin{lemma}[Number of phases upper-bound]\label{lemma:number-of-phases}
		$\MPCsimul$ executes $O(\log \log{n})$ phases.
	\end{lemma}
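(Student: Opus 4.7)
The plan is to track the upper bound $d$ on the maximum active degree, as updated at \cref{line:update-degree}, and show that $d$ contracts fast enough for the main while loop to terminate within $O(\log \log n)$ iterations.

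First, I would unfold the two settings at \cref{line:define-init-for-phase}: the number of machines in a phase is $\machines = \sqrt{d}$ and the number of simulated iterations is $I = \log(\machines)/(10\log 5) = \log(d)/(20\log 5)$. After each phase, \cref{line:update-degree} replaces $d$ by $d\cdot(1-\eps)^{I}$. Using $(1-\eps)^{I} = 2^{\log(1-\eps)\cdot \log(d)/(20\log 5)} = d^{\log(1-\eps)/(20\log 5)}$, I can express the update in closed form as
\[
	d \;\longmapsto\; d^{\,c}, \qquad \text{where } c \eqdef 1 + \frac{\log(1-\eps)}{20\log 5}.
\]
Since $\eps$ is a positive constant with $\eps < 1/50$, we have $\log(1-\eps) < 0$ and $c$ is a constant strictly less than $1$ (depending only on $\eps$).

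Next, I would set $d_0 = n$ (the initial value assigned at step (2) of $\MPCsimul$) and show by a straightforward induction that the value of $d$ at the start of the $k$-th phase is $d_k = n^{c^{k}}$. The while loop exits as soon as $d_k \le \log^{20} n$, equivalently once $c^{k}\log n \le 20\log\log n$. Taking logarithms and using that $1/c$ is a constant strictly greater than $1$, this is guaranteed as long as
\[
	k \;\ge\; \frac{\log\!\bigl(\log n / (20 \log\log n)\bigr)}{\log(1/c)} \;=\; O(\log \log n).
\]
Finally, I would note that \cref{line:directly-simulate}, the direct simulation after the while loop, can be bundled as a single additional phase, so the total number of phases is still $O(\log \log n)$.

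There is no real obstacle here: the argument is essentially a one-line recurrence analysis once the update $d \mapsto d^{c}$ with constant $c<1$ is extracted from the algorithm's parameter choices. The only subtlety worth flagging is verifying that $c$ is bounded away from $1$ by a constant (depending only on $\eps$), which follows from the standing assumption $\eps > 1/\log n$ in the remark preceding the analysis; this justifies treating the $1/\log(1/c)$ factor as an $O(1)$ quantity and hence absorbing it into the $O(\log \log n)$ bound.
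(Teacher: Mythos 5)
Your proof is correct and follows essentially the same route as the paper's: extract $I = \log d/(20\log 5)$, observe that the update $d\mapsto d(1-\eps)^I$ is of the form $d\mapsto d^{c}$ for a constant $c<1$ (your $c$ is exactly the paper's $1-\gamma$), and solve the doubly-exponential recurrence to get $O(\log\log n)$ phases. One minor imprecision: the constancy of $c$ (bounded away from $1$) comes from $\eps$ being a fixed constant, not from the weaker assumption $\eps > 1/\log n$; and \cref{line:directly-simulate} is not a single phase but $O(\log\log n)$ further iterations, which the paper accounts for separately in \cref{lemma:memory-and-round-complexity} rather than in this lemma.
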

	\begin{proof}
		Let $d_i$ be the degree $d$ of $\MPCsimul$ at the beginning of a phase, and $d_{i + 1}$ the degree updated at \cref{line:update-degree} after the phase ends. Let $I = (\log{m}) / (10 \log{10}) = (\log{d_i}) / (20 \log{10})$. Then, by the definition of the algorithm, we have the following relation
		\begin{equation}\label{eq:di-di+1-relation}
			d_{i + 1} = d_i (1 - \eps)^I = d_i \rb{\frac 12}^{\log{(1 / (1 - \eps)} \frac{\log{d_i}}{20 \log{10}}} = d_i^{1 - \frac{\log{(1 / (1 - \eps)}}{20 \log{10}}}.
		\end{equation}
		For the sake of brevity, define $\gamma := \tfrac{\log{(1 / (1 - \eps))}}{20 \log{10}}$. Observe that for a constant $\eps$ such that $0 < \eps < 1/2$ it implies that $\gamma$ is a constant and $\gamma < 1$. Then, from \cref{eq:di-di+1-relation} we have
		\[
			d_i = d_0^{(1 - \gamma)^i} = n^{(1 - \gamma)^i}.
		\]
		$\MPCsimul$ is executed for $\istar$ phases, where $\istar$ is the smallest integer such that $d_{\istar} \le \log^{20}{n}$. This implies
		\[
			n^{(1 - \gamma)^{\istar}} \le \log^{20}{n}.
		\]
		Taking $\log$ on the both sides of the last inequality, we obtain
		\[
			(1 - \gamma)^{\istar} \log{n} \le 20 \log \log{n}.
		\]
		Now a simple calculation shows that $\istar \in O\rb{\tfrac{\log \log{n}}{\log{(1 / (1 - \gamma))}}} \in O(\log \log{n})$.	
	\end{proof}
	
	\begin{proof}[Proof of \cref{lemma:memory-and-round-complexity}]
		By \cref{lemma:number-of-phases}, $\MPCsimul$ executes $O(\log \log{n})$ phases. Also, for constant $\eps$, \cref{line:directly-simulate} requires $O(\log \log{n})$ iterations. Furthermore, by \cref{lemma:size-of-induced-graphs}, each induced graph $G'[V_i]$ that is processed on a single machine during a phase has $O(n)$ size w.h.p.
		
		There is an implementation of $\MPCsimul$ such that, when every machine has space $O(n)$, each phase and each of the operations the algorithm performs are executed in $O(1)$ MPC-rounds. For more details on such implementation, we refer the reader to~\cite{czumaj2017round}, section \emph{MPC Implementation Details}, and to~\cite{goodrich2011sorting}.
	\end{proof}

\subsubsection{Properties of vertex- and edge-weights in $\MPCsimul$}
\label{sec:weight-properties}
In this section, we show that $\abs{\yideal_v - \ylocal_v}$ remains small for most of the vertices (this claim is formalized by \cref{lemma:max-diff-over-phase}). Before we provide an outline of the analysis, we state some definition and describe the notation we use.

\begin{definition}[Bad and good vertex]
\label{definition:bad-vertex}
	We say that vertex is \emph{bad} in a given phase if it gets frozen in $\LocalRand$ and not in $\MPCsimul$, or the other way around. Once a vertex becomes bad, it remains labeled bad throughout the whole phase, even if it becomes frozen in both $\LocalRand$ and in $\MPCsimul$. If a vertex is not bad, we say it is \emph{good}. In the beginning of a phase, all vertices are initialized as good.
\end{definition}

\begin{definition}[Local neighbor]
If a vertex $u \in N(v)$ is on the same machine as $v$ in the given iteration of $\MPCsimul$, then we say that $u$ is a \emph{local} neighbor of $v$.
\end{definition}

\paragraph{Notation}
We use $w_t$ to refer to the weight of active edges in the beginning of the $t^{th}$ iteration. Let $\NAideal(v, t)$ (resp. $\NAlocal(v, t)$) denote the active neighbors of $v$ at the beginning of the $t^{th}$ iteration of the ideal (resp. MPC) algorithm. Similarly, we use $\Nlocal(v, t)$ to denote the local neighbors of $v$ in iteration $t$. If it is clear from the context which iteration we are referring to, sometimes we omit $t$ from the notation. Throughout our proofs, we will be making claims of the following form $a = b \pm c$, which should be read as $a \in [b - c, b + c]$.

\paragraph{Analysis Outline} Recall that $\ylocal_{v, t}$ and $\yideal_{v, t}$ represent the fractional weight of vertex $v$ in the $t^{th}$ iteration of $\MPCsimul$ and $\LocalRand$, respectively. From our definition, we have $\yideal_{v, t} = \yideal_{v, t - 1} + \eps\, w_{t}\, |\NAideal(v, t)|$, and similarly $\ylocal_{v, t} = \ylocal_{v, t - 1} + \eps \, w_{t} \, \machines \, |\NAlocal(v, t)|$.
To say that the algorithms stay close to each other, we upper-bound $\abs{\yideal_{v, t} - \ylocal_{v, t}}$ inductively as a function of $t$. Suppose that we already have an upper bound on $\abs{\yideal_{v, t - 1} - \ylocal_{v, t - 1}}$\footnote{In our analysis, we assume an upper-bound on a somewhat different quantity $\difflocal(v, t-1)$ given in \cref{definition:difflocal}. It can be shown that $\abs{\yideal_{v, t - 1} - \ylocal_{v, t - 1}} \le \difflocal(v, t-1)$, see \cref{observation:difflocal-upper-bound-y-diff}.}; we focus on upper-bounding the difference between $|\NAideal(v, t)|$ and $\machines |\NAlocal(v, t)|$. There are two parts of the algorithm that affect that difference:
\begin{itemize}
	\item[(1)] Neighbors of $v$ might be bad. Moreover, due to bad vertices, $\NAlocal(v, t)$ might not even be a subset of $\NAideal(v, t)$.
	
	\item[(2)] Even in the very first iteration, or more generally even if there is no bad vertex in $\NAlocal(v, t)$, the set $\NAlocal(v, t)$ is \emph{a random} sample of $\NAideal(v, t)$. Hence, $|\NAlocal(v, t)|$ deviates from its expectation $|\NAideal(v, t)| / \machines$, contributing to the mentioned difference.
\end{itemize}
In our analysis, we assume that at the beginning of each phase $\MPCsimul$ and $\LocalRand$ start from the same fractional matching. Namely, we compare $\MPCsimul$ to the behavior of $\LocalRand$ letting the initial $x$ equal $\xMPC$, for the value of $\xMPC$ at the beginning of a given phase. Since we ensure that $\xMPC$ is at the beginning of a phase always a valid fractional matching, $\LocalRand$ in our approach will also maintain a valid fractional matching.

Also, we assume that the thresholds, i.e., $\cT_{v, t}$ for each $v \in V$ and each iteration $t$, are the same for both $\MPCsimul$ and $\LocalRand$. Note that the latter algorithm is only a hypothetical one, whose purpose is to compare our simulation to a process that constructs a fractional matching, so this assumption is made without loss of generality.

In the rest of this analysis and for the sake of brevity, we assume that $\eps \le 1/2$.

\paragraph{Analysis}

The following claim is a direct consequence of choosing the thresholds randomly in each iteration.
\begin{lemma}\label{claim:prob-to-differ}
	Consider the $t^{th}$ iteration of a phase. Let $\abs{\yideal_{v, t} - \ylocal_{v, t}} \le \sigma$ for every vertex $v$ that is active in both $\LocalRand$ and $\MPCsimul$. Then, $v$ becomes bad in the $t^{th}$ iteration with probability at most $\eps / \sigma$ and independently of other vertices.
\end{lemma}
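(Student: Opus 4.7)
The plan is to reduce the event that $v$ becomes bad in iteration $t$ to the event that the freshly drawn threshold $\cT_{v,t}$ lands in a short interval, and then exploit the uniformity of $\cT_{v,t}$.

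First, I would pin down the precise characterization of badness at iteration $t$. By the hypothesis of the lemma, $v$ is still active in both $\LocalRand$ and $\MPCsimul$ at the start of iteration $t$. Both algorithms compare the corresponding vertex-weight ($\yideal_{v,t}$ or $\ylocal_{v,t}$) to the \emph{same} threshold $\cT_{v,t}$, so they freeze $v$ differently iff $\cT_{v,t}$ separates the two weights. Equivalently, $v$ becomes bad at iteration $t$ if and only if $\cT_{v,t}$ lies in the closed interval with endpoints $\yideal_{v,t}$ and $\ylocal_{v,t}$, which by assumption has length at most $\sigma$.

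Next, I would condition on all randomness prior to the drawing of $\cT_{v,t}$ — namely all earlier thresholds $\{\cT_{u,s}: s<t,\, u\in V\}$ together with the random partitions of $V'$ into machines used in the current and previous phases. Under this conditioning, both $\yideal_{v,t}$ and $\ylocal_{v,t}$ are deterministic functions of the history (they are built recursively from the initial edge weights $1/n$, the machine assignments, and the previously drawn thresholds, which together determine which vertices were frozen and when). Hence the target interval is fixed. Since, by the algorithm's sampling rule, $\cT_{v,t}$ is uniform on $[1-4\eps,\, 1-2\eps]$ and independent of the conditioned history, the conditional probability that $\cT_{v,t}$ hits a fixed interval of length $\le\sigma$ is at most $\sigma/(2\eps)$, which is the claimed bound (up to constants).

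For the independence assertion: conditioned on the same history, the event that $v$ becomes bad in iteration $t$ is a measurable function of the single threshold $\cT_{v,t}$ only, and by construction the thresholds $\{\cT_{v,t}\}_{v\in V}$ are mutually independent. Hence the badness events in iteration $t$ are conditionally independent across vertices, which is exactly what the lemma means by ``independently of other vertices''. The only subtle point, and the main thing to be careful about, is choosing the right $\sigma$-algebra to condition on: it must be rich enough that $\yideal_{v,t}$ and $\ylocal_{v,t}$ become measurable (so the interval is fixed), yet not so rich that it absorbs the freshly drawn $\cT_{v,t}$ — conditioning on all randomness other than the time-$t$ thresholds achieves exactly this.
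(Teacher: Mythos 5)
Your proof is correct and uses essentially the same idea as the paper's: the two algorithms disagree about $v$ precisely when the freshly drawn uniform threshold $\cT_{v,t}$ lands in a short interval, and the probability of that is the interval length divided by $2\eps$. The paper bounds the bad event by $\abs{\ylocal_{v,t}-\cT_{v,t}}\le\sigma$, a symmetric interval of length $2\sigma$, giving probability $2\sigma/(2\eps)=\sigma/\eps$; you instead work with the exact interval between $\yideal_{v,t}$ and $\ylocal_{v,t}$, of length at most $\sigma$, getting $\sigma/(2\eps)$, which is a factor $2$ tighter. Your explicit conditioning on the history prior to the time-$t$ thresholds is a useful formalization of what the paper leaves implicit, and correctly handles both the ``fixed interval'' and the ``independence across vertices'' claims. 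One thing worth flagging: the lemma as stated in the paper says ``at most $\eps/\sigma$'', which is a typo (the reciprocal); the paper's own proof, its downstream use in the proof of the main lemma (where they plug in $\sigma=\machines^{-0.1}$ and get $\machines^{-0.1}/\eps$), and your derivation all give $\sigma/\eps$ (or better), so your phrase ``the claimed bound (up to constants)'' should really say ``matches the bound the paper actually uses, $\sigma/\eps$, the statement's $\eps/\sigma$ being a typo.''
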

\begin{proof}
	If $\abs{\ylocal_{v, t} - \cT_{v, t}} > \sigma$, then $\MPCsimul$ and $\LocalRand$ would behave the same with respect to vertex $v$. Since $\cT_{v, t}$ is chosen uniformly at random within interval of size $2 \eps$, $\MPCsimul$ and $\LocalRand$ would differ in iteration $t$ with respect to $v$ with probability at most $2 \sigma / (2 \eps) = \sigma / \eps$. Furthermore, as $\cT_{v, t}$ is chosen independently of other vertices, $v$ becomes bad independently of other vertices.
\end{proof}

There are two distinct steps where $\MPCsimul$ directly or indirectly estimates $\yideal$. The first one is computing $\ylocal$, which is used to deduce whether a vertex should be frozen or not. The second one corresponds to the actual weight that $\MPCsimul$ assigns to the vertices. Namely, at the end of a phase, weight is assigned to each edge (\cref{line:account-for-distinct-machines}) -- for edge $e = \{u ,v\}$, if $u$ or $v$ is frozen, then it is set $\xMPC_e = w_t$, where $t$ is the iteration when the first of the two vertices got frozen; otherwise, $\xMPC_e = w_t$ for $t$ being the most recent simulated iteration. Then, the weight of a vertex $v$, that we denote by $\yMPC_v$, is simply the sum of all $\xMPC_e$ incident to $v$. This can be seen as an indirect estimate of $\yideal_v$.

Our next goal is to understand how does the estimate $\ylocal_v$ and simulated vertex weight $\yMPC_v$ relate to $\yideal_v$. To that end, we define the notion to capture the difference in how the weights $\yideal_v$ and $\yMPC_v$ are composed.
\begin{definition}[Weight-difference]
	We use $\diff(v, t)$ to denote the total weight of the edges that contributed to the weight of $\yideal_{v, t}$ and not to $\yMPC_{v, t}$, and the other way around. Formally, let $\xMPC_{e, t}$ be the updated weight of edge $e$ in iteration $t$ in $\MPCsimul$ (updated in the sense as given by \cref{line:account-for-distinct-machines} of the algorithm). Let $x_{e, t}$ be the weight of edge $e$ in iteration $t$ in $\LocalRand$. Then, 
	\[
		\diff(v, t) := \sum_{e \in N(v)} \abs{x_{e, t} - \xMPC_{e, t}}.
	\]
\end{definition}
Notice that $\abs{\yideal_{v, t} - \yMPC_{v, t}} \le \diff(v, t)$. In general it might be the case that $\abs{\yideal_{v, t} - \yMPC_{v, t}} < \diff(v, t)$. For instance, consider two edges $e_1$ and $e_2$ both incident to $v$. Assume that in algorithm $\LocalRand$ $e_1$ is active, while $e_2$ is frozen. On the other hand, assume that in $\MPCsimul$ it is the case that $e_1$ is frozen while $e_2$ active. So, these two edges alone do not make any difference in the change of the weight of $\yideal_{v, t}$ and $\yMPC_{v, t}$ -- their effects cancel out. However, their effects do not cancel each other in the definition of $\diff(v, t)$.

Similarly, to track $\abs{\yideal_{v, t} - \ylocal_{v, t}}$ we define the following notion.
\begin{definition}[Weight-difference local]\label{definition:difflocal}
	Let $t$ be an iteration of a phase and $\tstar$ be the very first iteration of the same phase.
	Let $\Blocal_{v, t}$ be the set of bad vertices at the beginning of iteration $t$ that are local neighbors of $v$.\footnote{Recall that once a vertex becomes bad in a given phase it remains bad throughout rest of the phase; see \cref{definition:bad-vertex}.} The set $\Blocal_{v, t}$ accounts for the vertices $\NAlocal(v, \hatt) \setminus \NAideal(v, \hatt)$, and for the vertices in $\Nlocal(v, \hatt) \cap \NAideal(v, \hatt)$ but not in $\NAlocal(v, \hatt)$, for $\hatt = \tstar \ldots t$.
	
	We define
	\begin{align}
		\difflocal(v, t) := & w_{\tstar} \cdot \abs{\machines \cdot |\Nlocal(v, \tstar) \cap \NAideal(v, \tstar)| - |\NAideal(v, \tstar)|} \label{eq:difflocal-tstar} \\
		+ & \sum_{\hatt = \tstar + 1}^{t} \eps \cdot w_{\hatt} \cdot \abs{\machines \cdot |\Nlocal(v, \hatt) \cap \NAideal(v, \hatt)| - |\NAideal(v, \hatt)|} \label{eq:difflocal-sum-iterations} \\
			+ & m \cdot w_t \cdot |\Blocal_{v, t}|. \label{eq:difflocal-bad-vertices}
	\end{align}
\end{definition}
We first show that $\difflocal(v, t)$ is a desired quantity.
\begin{lemma}
\label{observation:difflocal-upper-bound-y-diff}
	It holds
	\[
		\abs{\yideal_{v, t} - \ylocal_{v, t}} \le \difflocal(v, t).
	\]
\end{lemma}
\begin{proof}
	The proof of this claim we gave in an informal way by our discussion above. Here we expand that discussion.
	
	Let $\tstar$ be the very first iteration of the phase that iteration $t$ belongs to.
	We have
	\begin{align}
		\abs{\yideal_{v, t} - \ylocal_{v, t}} = & \abs{\yideal_{v, \tstar} + \sum_{\hatt = \tstar + 1}^t \rb{\yideal_{v, \hatt} - \yideal_{v, \hatt - 1}} - \rb{\ylocal_{v, \tstar} + \sum_{\hatt = \tstar + 1}^t \rb{\ylocal_{v, \hatt} - \ylocal_{v, \hatt - 1}}}} \nonumber \\
		& \le \abs{\yideal_{v, \tstar} - \ylocal_{v, \tstar}} + \sum_{\hatt = \tstar + 1}^t \abs{\rb{\yideal_{v, \hatt} - \yideal_{v, \hatt - 1}} - \rb{\ylocal_{v, \hatt} - \ylocal_{v, \hatt - 1}}}. \label{eq:difference-ylocal-yideal-rewritten}
	\end{align}
	\textbf{First}, consider the difference between $\yideal_{v, t}$ and $\ylocal_{v, t}$ coming from random partitioning and \emph{assume that no vertex is bad} -- we will account for bad vertices afterward.
	In \cref{eq:difference-ylocal-yideal-rewritten}, $\abs{\yideal_{v, \tstar} - \ylocal_{v, \tstar}}$ equals the right-hand side of \cref{eq:difflocal-tstar}; there are no bad vertices in iteration $\tstar$ so those two terms are actually equal.
	
	Assuming that no bad vertex exists, then $\ylocal_{v, \hatt} - \ylocal_{v, \hatt - 1}$ equals the weight increase of the vertices in $\Nlocal(v, \hatt) \cap \NAideal(v, \hatt)$ from iteration $\hatt - 1$ to iteration $\hatt$, multiplied by $\machines$. Their weight increases by $w_{\hatt} - w_{\hatt - 1} = \eps w_{\hatt}$. Similarly, $\yideal_{v, \hatt} - \yideal_{v, \hatt - 1} = \eps w_{\hatt} \cdot |\NAideal(v, \hatt)|$. Therefore, \cref{eq:difflocal-sum-iterations} captures the summation in \cref{eq:difference-ylocal-yideal-rewritten}.
\\\\
	\noindent \textbf{Second}, consider bad vertices. Lad $u$ be a bad vertex. Compared to our assumption when no vertex is bad, the vertex $u$ affects $\abs{\yideal_{v, t} - \ylocal_{v, t}}$ in one of two ways:
	$u \in \Nlocal(v, \hatt) \cap \NAideal(v, \hatt)$ but $u \notin \NAlocal(v, \hatt)$; or, $u \in \NAlocal(v, \hatt)$ but $u \notin \NAideal(v, \hatt)$. In the former case, $u$ increases $\yideal_{v, \hatt}$ but not $\ylocal_{v, \hatt}$. In the latter case, $u$ increases $\ylocal_{v, t}$ but not $\yideal_{v, t}$. Nevertheless, compared to our assumption when no vertex is bad, the total effect of $u$ on $\abs{\yideal_{v, t} - \ylocal_{v, t}}$ is upper-bounded by $\machines \cdot \rb{w_{\tstar} + \sum_{\hatt = \tstar + 1}^{t} \eps \cdot w_{\hatt}} = \machines \cdot w_t$. This effect, over all bad vertices, is captured by \cref{eq:difflocal-bad-vertices}.
	
	This concludes the analysis.
\end{proof}

As a first step, we show that $\diff(v, \cdot)$ and $\difflocal(v, \cdot)$ are small in the first iteration of a phase.
\begin{lemma}\label{claim:first-iteration}
	Let iteration $\tstar$ be the first iteration of a phase, and let $v$ be an active vertex by iteration $\tstar$. Then, w.h.p.
	\[
		\difflocal(v, \tstar) \le \machines^{-0.2}.
	\]
	Furthermore, $\diff(v, \tstar) = 0$ with certainty.
\end{lemma}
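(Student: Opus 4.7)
My plan has two parts, one for each statement. For the claim that $\diff(v, \tstar) = 0$ with certainty, I will use the convention stated at the start of the analysis that at the beginning of each phase $\LocalRand$ is instantiated with $x_{e} = \xMPC_{e}$ for every edge. Since $\tstar$ is the very first iteration of the phase, no update has yet occurred in either algorithm, so $x_{e, \tstar} = \xMPC_{e, \tstar}$ for all $e$, and $\diff(v, \tstar) = 0$ follows immediately from the definition.

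For the concentration bound, I will decompose the difference into a sum of independent, bounded, zero-mean random variables, one per active neighbor of $v$, and apply Bernstein's inequality. Using the same initial-state convention and that $\yold_v$ is defined identically in both algorithms, I obtain
\[
	\yideal_{v, \tstar} - \ylocal_{v, \tstar} \;=\; \sum_{e = \{v, u\} \in G',\, e \ni v} \xMPC_e \rb{1 - \machines \cdot \mathbbm{1}[u \in V_i]},
\]
where $V_i$ is the machine to which $v$ is assigned. Each summand $W_e := \xMPC_e(1 - \machines \cdot \mathbbm{1}[u \in V_i])$ has mean zero because $\prob{u \in V_i} = 1/\machines$, and different $W_e$ are independent because the partition $V_1, \ldots, V_{\machines}$ is chosen by independent per-vertex sampling. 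From \cref{obs:max-active-weight} I have $\xMPC_e \le (1 - 2\eps)/\machines^2$ at the start of the phase, so $|W_e| \le (1-2\eps)/\machines$. Because $v$ is still active at the start of the phase, \cref{line:remove-more-1-2eps} of the previous phase guarantees $\sum_{e \in G',\, e \ni v} \xMPC_e \le 1 - 2\eps$, which together with the per-edge cap yields $\sum_e \operatorname{Var}(W_e) \le \machines \sum_e (\xMPC_e)^2 \le O(1/\machines)$.

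Plugging these into Bernstein's inequality at deviation $t = \machines^{-0.2}$, the exponent is $\Omega(t^2 / (1/\machines + t/\machines)) = \Omega(\machines^{0.6})$. Since $\MPCsimul$ only runs the main while loop while $d > \log^{20} n$, we have $\machines = \sqrt{d} > \log^{10} n$ throughout, so the failure probability is at most $\exp(-\Omega(\log^{6} n))$, which is $n^{-\omega(1)}$ and hence w.h.p. I expect the main subtlety to be insisting on Bernstein rather than Hoeffding: by \cref{lemma:bound-degree-in-G'} the degree of $v$ in $G'$ can be as large as $d = \machines^2$, so a magnitude-only bound spreading the deviation over $\Theta(\machines^2)$ terms of size $1/\machines$ would not close to below $\machines^{-0.2}$. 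The tighter variance estimate $O(1/\machines)$, which uses both $\sum \xMPC_e \le 1 - 2\eps$ and the per-edge cap from \cref{obs:max-active-weight}, is exactly what drives the Bernstein exponent to $\machines^{0.6}$.
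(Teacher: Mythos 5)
Your proof is correct, and both parts agree with the paper's conclusion, but the concentration argument takes a genuinely different route. The paper exploits the fact (from \cref{obs:max-active-weight} and \cref{obs:degree-active-weight}) that at the start of a phase every active edge has the same weight $w_{\tstar}$, so $\yideal_{v,\tstar}-\ylocal_{v,\tstar} = w_{\tstar}\bigl(|\NAideal(v,\tstar)| - \machines\,|\NAlocal(v,\tstar)|\bigr)$ and the deviation reduces to concentration of the binomial count $|\NAlocal(v,\tstar)|$. It then splits into two cases depending on whether $\mu = \E{|\NAlocal(v,\tstar)|}$ is above or below $\machines^{0.6}$, applying the multiplicative and additive forms of Chernoff respectively, and finally rescales by $w_{\tstar}\le \machines^{-1.8}$ and $|\NAideal(v,\tstar)|\,w_{\tstar}\le 1$ to get $\machines^{-0.2}$. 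You instead decompose the difference directly as $\sum_e W_e$ with $W_e = \xMPC_e(1-\machines\cdot\mathbbm{1}[u\in V_i])$ and apply Bernstein, using the per-edge cap $\xMPC_e\le (1-2\eps)/\machines^2$ and the total-weight constraint $\sum_e \xMPC_e\le 1-2\eps$ to bound the variance by $O(1/\machines)$. Your route avoids the paper's case split and does not rely on the edge weights being uniform at the start of the phase, so it is slightly more general, while the paper's route is arguably more elementary in that it only concentrates an integer-valued random variable. Both yield the same exponent $\Omega(\machines^{0.6})\ge\Omega(\log^6 n)$ since $\machines\ge\log^{10}n$ throughout the while loop. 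One small technical note: your decomposition implicitly conditions on the machine index $i$ of $v$ before treating the $\mathbbm{1}[u\in V_i]$ as independent Bernoulli$(1/\machines)$ variables, which is fine but worth making explicit, since the per-vertex assignments are independent of each other and of $v$'s own placement.
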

\begin{proof}
	To argue that $\diff(v, \tstar) = 0$ it suffices to observe that, at the beginning of a phase, edge-weights in $\MPCsimul$ and $\LocalRand$ coincide.

	We upper-bound $\difflocal(v, \tstar)$ as follows. At the beginning of the phase, no vertex is bad. Hence, $\difflocal(v, \tstar)$ only accounts for the random partitioning of the vertices. Also, note that $\NAlocal(v, \tstar) = \Nlocal(v, \tstar) \cap \NAideal(v, \tstar)$.
	
	Define $\mu \eqdef \E{|\NAlocal(v, \tstar)|}$. We now consider two cases, based on $\mu$.
	\paragraph{Case $\mu \le \machines^{0.6}$}
		Then, from Chernoff bound (\cref{lemma:chernoff}~\ref{item:delta-at-least-1}) we have
		\begin{equation}\label{eq:case-mu-small}
			\Prob{\abs{|\NAlocal(v, \tstar)| - \mu} \ge \machines^{0.6}} = \Prob{|\NAlocal(v, \tstar)| - \mu \ge \machines^{0.6}} \le \exp{\rb{-\machines^{0.6} / 3}},
		\end{equation}
		which is high probability as $\machines \ge \log^{10}{n}$ during every phase.
		
	\paragraph{Case $\mu > \machines^{0.6}$}
		Now again by Chernoff bound (\cref{lemma:chernoff}~\ref{item:delta-at-most-1}), we have
		\begin{equation}\label{eq:case-mu-large}
			\Prob{\abs{|\NAlocal(v, \tstar)| - \mu} \ge \machines^{-0.2} \mu} \le 2 \exp{\rb{\machines^{-0.4} \mu / 3}} \le 2 \exp{\rb{\machines^{0.2} \mu / 3}}.
		\end{equation}
		This probability is again high, as $\machines \gg \log{n}$.
	
	\paragraph{Combining the two cases}
	Observe that $\NAlocal(v, \tstar)$ is a random sample of $\NAideal(v, \tstar)$, and hence $\mu = \tfrac{|\NAideal(v, \tstar)|}{\machines}$. From \cref{eq:case-mu-small} and \cref{eq:case-mu-large} we derive that w.h.p.~it holds
	\[
		|\NAlocal(v, \tstar)| = \frac{|\NAideal(v, \tstar)|}{\machines} \pm \max\left\{\machines^{-0.2} \frac{|\NAideal(v, \tstar)|}{\machines}, \machines^{0.6}\right\}.
	\]
	Hence, w.h.p.
	\[
		\difflocal(v, \tstar) \le w_{\tstar} \max\{\machines^{-0.2} |\NAideal(v, \tstar)|, \machines^{1.6}\}.
	\]
	Now, from \cref{obs:degree-active-weight} and \cref{obs:max-active-weight} we have that $|\NAideal(v, \tstar)| w_{\tstar} \le 1$ and $w_{\tstar} \le \machines^{-1.8}$. This implies that w.h.p. it holds
	\[
		\difflocal(v, \tstar) \le \machines^{-0.2},
	\]
	as desired.
\end{proof}

We now prove our main technical lemma, which is used to quantify the increase in the difference between $\yideal_v$ and its estimates $\ylocal_v$ and $\yMPC_v$ over the course of one phase.
\begin{lemma}[Evolution of weight-estimates]\label{claim:change-of-sigma}
	Let $v$ be an active vertex in iteration $t-1$ in both $\LocalRand$ and $\MPCsimul$. Then, if $\difflocal(v, t - 1) \le \sigma$ and $\diff(v, t - 1) \le \sigma$, the following holds w.h.p.:
	\begin{itemize}
		\item $\difflocal(v, t) \le 5 (\sigma + \eps \machines^{-0.2})$, and
		\item $\diff(v, t) \le 5 (\sigma + \eps \machines^{-0.2})$.
	\end{itemize}
\end{lemma}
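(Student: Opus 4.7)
The plan is to track the one-step growth of both $|\yideal_{v,t} - \ylocal_{v,t}|$ and $\diff(v,t)$. From the update rules noted in the analysis outline,
\[
\yideal_{v,t} - \yideal_{v,t-1} = \eps\, w_{t-1}\, |\NAideal(v,t)|, \qquad \ylocal_{v,t} - \ylocal_{v,t-1} = \eps\, w_{t-1}\, \machines\, |\NAlocal(v,t)|,
\]
so the discrepancy grows by at most $\eps\, w_{t-1}\cdot \bigl||\NAideal(v,t)| - \machines\,|\NAlocal(v,t)|\bigr|$, and the task reduces to bounding this count-difference and then reusing the same ingredients for $\diff$.

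I would partition the neighbors of $v$ into \emph{good} active neighbors $A^{g}$ (active in both $\LocalRand$ and $\MPCsimul$), neighbors $B^{I}$ active only in $\LocalRand$, and neighbors $B^{L}$ active only in $\MPCsimul$. Then $|\NAideal(v,t)| = |A^{g}|+|B^{I}|$ and $\NAlocal(v,t) = (A^{g}\cap V_{i})\cup(B^{L}\cap V_{i})$, so the count-difference splits as $\bigl||A^{g}| - \machines\,|A^{g}\cap V_{i}|\bigr| + |B^{I}| + \machines\,|B^{L}\cap V_{i}|$. The first piece is handled exactly as in \cref{claim:first-iteration}: since $A^{g}\cap V_{i}$ is Binomial with mean $|A^{g}|/\machines$, Chernoff gives a bound of $\max\{\machines^{-0.2}|A^{g}|,\machines^{0.6}\}$ w.h.p., and combining this with $w_{t-1}\cdot|A^{g}|\le 1$ from \cref{obs:degree-active-weight} and $w_{t-1}\le \machines^{-1.8}$ from \cref{obs:max-active-weight} converts this into an $\eps\,\machines^{-0.2}$ additive contribution.

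For the bad-neighbor pieces I would split into \emph{new} bad (becoming bad in iteration $t$) and \emph{old} bad (already bad by iteration $t-1$). For the new bad: by the hypothesis $|\yideal_{u}-\ylocal_{u}|\le \sigma$ for every active vertex $u$, \cref{claim:prob-to-differ} makes each active neighbor turn bad in iteration $t$ independently with probability at most $\sigma/\eps$; Chernoff applied to the new-bad indicators, and separately to their restriction to machine $V_{i}$, gives $|B^{I}_{new}|+\machines\,|B^{L}_{new}\cap V_{i}| = O((\sigma/\eps)|A^{g}|)$ w.h.p., which after multiplication by $\eps\,w_{t-1}$ and using $w_{t-1}|A^{g}|\le 1$ contributes $O(\sigma)$. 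For the old bad: I exploit the second hypothesis, namely that $\diff$ was small in the previous iteration. Each old bad neighbor contributes to that quantity an amount proportional to the edge-weight gap between the two algorithms, and since the weight-multiplication step has been applied at least once since the neighbor first became bad, this gap is at least on the order of $\eps\, w_{t-1}$; this yields $\eps\, w_{t-1}\cdot|B_{old}| = O(\sigma)$. Adding these three contributions to $|\yideal_{v,t-1}-\ylocal_{v,t-1}|\le \sigma$ and tightening the Chernoff constants gives the target bound $4(\sigma+\eps\,\machines^{-0.2})$.

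The bound on $\diff(v,t)$ follows from the same decomposition: because the synchronized update at \cref{line:account-for-distinct-machines} ensures both algorithms assign identical weight to any edge whose endpoints are active and good, every contribution to $\diff(v,t)$ comes from an edge incident to a bad neighbor and is at most $w_{t}$; using the counts on $|B^{I}|+|B^{L}|$ derived above together with $w_{t}\cdot|\NAideal(v,t-1)|\le 1$ yields the claimed bound. I expect the main obstacle to be the old-bad accounting: a neighbor that has just become bad at iteration $t-1$ contributes essentially nothing to $\diff(v,t-1)$, so the crude lower bound $\diff(v,t-1)\gtrsim \eps\,w_{t-1}\cdot|B_{old}|$ requires either an off-by-one shift, or carrying an extra slackness term through the induction (so that the new-bad term from the previous step pays for the old-bad term here). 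Pinning the multiplicative constant at exactly $4$, rather than a larger one, is the other delicate point and relies on combining all the Chernoff slacks and the $\eps\,\machines^{-0.2}$ term tightly.
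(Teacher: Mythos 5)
Your decomposition matches the paper's: split the one-step growth of $\abs{\yideal_{v,t}-\ylocal_{v,t}}$ into (i) random-partitioning noise on good neighbors, (ii) neighbors newly turning bad, controlled by \cref{claim:prob-to-differ} plus Chernoff, and (iii) previously bad neighbors. You also invoke the same invariants (\cref{obs:degree-active-weight}, \cref{obs:max-active-weight}) to turn neighbor counts into weight bounds, and the $\eps\machines^{-0.2}$ and $O(\sigma)$ contributions you obtain for pieces (i) and (ii) agree with the paper's. (Aside: the lemma statement writes $\diff(v,t)\le\sigma$ in the hypothesis; as you implicitly assume, this is a typo for $\diff(v,t-1)\le\sigma$ --- the paper's own proof takes $\abs{\yideal_{v,t-1}-\yMPC_{v,t-1}}\le\sigma$ as the hypothesis.)

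Where you diverge from the paper, and where your acknowledged gap lives, is piece (iii). The paper never counts old bad neighbors. It observes that the hypothesis $\diff(v,t-1)\le\sigma$ (resp.\ $\abs{\yideal_{v,t-1}-\ylocal_{v,t-1}}\le\sigma$) already bounds the accumulated effect of these neighbors, and that over one more iteration this accumulated effect can grow by at most a $1/(1-\eps)$ factor, giving $(1+\eps)\sigma$. You instead divide $\diff(v,t-1)$ by the per-neighbor gap $\gtrsim\eps w_{t-1}$ to bound $|B_{old}|$, and then multiply by the per-neighbor one-step increment $\eps w_{t-1}$. As you correctly flag, this lower bound on the per-neighbor gap fails for neighbors that became bad at iteration $t-1$ itself, which contribute essentially zero to $\diff(v,t-1)$. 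The paper's indexing avoids exactly this: its ``new bad'' set is $\bigl(\NAlocal(v,t-1)\cap\NAideal(v,t-1)\bigr)\setminus\bigl(\NAlocal(v,t)\cap\NAideal(v,t)\bigr)$, i.e.\ vertices whose disagreement occurs at step~(A) of iteration $t-1$, while ``old bad'' $B_{v,t-1}$ are those already bad at the start of iteration $t-1$ and thus fully visible in $\diff(v,t-1)$. With this shift --- which is precisely the ``off-by-one'' fix you suggest --- the vertices you are worried about move into the new-bad bucket, where \cref{claim:prob-to-differ} handles them, and your counting argument for the genuinely old bad neighbors goes through. Beyond that realignment, your argument and the paper's are the same calculation, and the remaining work is indeed the constant bookkeeping to land on $4(\sigma+\eps\machines^{-0.2})$.
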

\begin{proof}
	We proceed by upper-bounding the effect of three different kinds of vertices on $\difflocal(v, t)$ and $\diff(v, t)$: bad vertices prior to the $t^{th}$ iteration; vertices becoming bad in the $t^{th}$ iteration; and, the effect of the random partitioning. Observe that $\diff(v, t)$ is \emph{not directly} affected by the random partitioning.
	
	\paragraph{Old bad vertices and partitioning effect prior to iteration $t$} 
	Let $\Blocal_{v, t - 1}$ be as in \cref{definition:difflocal}. Note that $\sigma$ upper-bounds both the effect of vertex partitioning and the effect of $\Blocal_{v, t - 1}$ before iteration $t$. The weight of the vertices in $\Blocal_{v, t - 1}$ increases by $w_t - w_{t-1} = \tfrac{\eps}{1-\eps} w_{t-1} \le 2\eps w_{t-1}$ from iteration $t-1$ to $t$; recall that we assume $\eps \le 1/2$.
	Hence, effect of vertex partitioning before iteration $t$ and the effect of old bad vertices $\Blocal_{v, t - 1}$ in iteration $t$ is at most $(1 + 2 \eps) \sigma$ combined.
	
	
		In a similar way, define $B_{v, t - 1}$ to be the set of bad neighbors of $v$ across \emph{all} the machines at the beginning of iteration $t - 1$. Then, we get that from iteration $t-1$ to iteration $t$ the effect of the old bad vertices on $\diff(v, t)$ increases by at most $2 \eps \sigma$.
	
	\paragraph{New bad vertices} In addition to the bad vertices in $\Blocal_{v, t - 1}$, there might be new bad vertices in the beginning of the $t^{th}$ iteration -- the vertices of $\NAlocal(v, t - 1) \cap \NAideal(v, t - 1)$ that are not in $\NAlocal(v, t) \cap \NAideal(v, t)$. To upper-bound the weight of those bad vertices, we first upper-bound the cardinality of $\NAlocal(v, t - 1) \cap \NAideal(v, t - 1)$. For the sake of brevity, define
	\[
		\nlocal_{v, t - 1} := |\Nlocal(v, t - 1) \cap \NAideal(v, t - 1)|
	\]
	where, as a reminder, the set $\Nlocal(v, t - 1)$ refers to the local neighbors (both frozen and active) of $v$. We trivially have
		\[
			|\NAlocal(v, t - 1) \cap \NAideal(v, t - 1)| \le \nlocal_{v, t - 1}.
		\]
		Then, by \cref{claim:prob-to-differ,observation:difflocal-upper-bound-y-diff} and our assumption $\difflocal(v, t - 1) \le \sigma$, the number of new bad vertices is in expectation at most $\nlocal_{v, t - 1} \sigma / \eps$. We now proceed by providing a sharp concentration around this expected value. To that end, we provide an upper-bound on $\nlocal_{v, t - 1}$ that holds w.h.p.
\\
Observe that $\NAideal(v, t - 1)$ is defined deterministically and independently of the MPC algorithm. Then, if $|\NAideal(v, t - 1)| \ge \machines^{1.6}$, we have that w.h.p. $\nlocal_{v, t - 1} \le (1 + \machines^{-0.2}) |\NAideal(v, t - 1)| / \machines$. Otherwise, if $|\NAideal(v, t - 1)| < \machines^{1.6}$, then w.h.p. $\nlocal_{v, t - 1} \le 2 \machines^{0.6}$. Therefore, for $\gamma := \max\{(1 + \machines^{-0.2}) |\NAideal(v, t - 1)|, 2 \machines^{1.6}\}$, we have the w.h.p.
\[
	\machines \cdot \nlocal_{v, t - 1} \le \gamma.\footnote{A more detailed proof for this type of claim is given in the proof of \cref{claim:first-iteration}.}
\]
Applying similar reasoning about $\nlocal_{v, t-1} \sigma / \eps$, i.e., considering cases $\nlocal_{v, t-1} \sigma / \eps \ge \machines^{0.6}$ and $\nlocal_{v, t-1} \sigma / \eps < \machines^{0.6}$, we obtain that w.h.p.~the number of new bad vertices is upper-bounded by $\max\{(1 + \machines^{-0.2}) \nlocal_{v, t - 1} \sigma / \eps,  2 \machines^{0.6}\}$. So, putting all together, we have that the weight coming from new bad vertices that affects the local estimate of $\yideal_{v, t}$ is at most 
\[
	\sigma_2 := \eps w_{t} \cdot \max\{(1 + \machines^{-0.2}) \gamma \sigma / \eps, 2 \machines^{1.6}\}.
\]
But now, using that $w_{t} \le \machines^{-1.8}$ and also that $|\NAideal(v, t - 1)| w_{t - 1} \le (1 - 2\eps)$ as $v$ is an active vertex in $\LocalRand$ in iteration $t-1$, we derive $\sigma_2 \le 2 (\sigma +\eps \machines^{-0.2})$.

	It remains to comment about the effect of new bad vertices on $\diff(v, t)$. Note that the expected number of new bad vertices affecting $\diff(v, t)$ is at most $|\NAideal(v, t - 1)| \sigma / \eps$. So, applying the same arguments as above, the weight of new bad vertices affects $\diff(v, t)$ by at most $\sigma_2$ w.h.p.
	
	\paragraph{Effect of random partitioning -- \cref{eq:difflocal-sum-iterations}} Finally, we upper-bound the effect of the random partitioning in iteration $t$ on $\difflocal(v, t)$. Similarly to our arguments given in \cref{claim:first-iteration}, we have that w.h.p.~the number of vertices of $\NAideal(v, t)$ that are local neighbors of $v$ deviates from $|\NAideal(v, t)| / \machines$ by at most $\eta$ defined as
	\[
		\eta := \max\{\machines^{-0.2} |\NAideal(v, t)|, \machines^{1.6}\} / \machines.
	\]
	The total weight of these vertices scaled by $\machines$ is at most $\eps \cdot w_{t} \cdot \machines \cdot \eta \le \eps \machines^{-0.2}$.
	
	\paragraph{Final step} Putting altogether, if
	\[
		\difflocal(v, t - 1) \le \sigma
	\]
	and
	\[
		\diff(v, t - 1) \le \sigma,
	\]
	then we have
	\begin{eqnarray*}
		\difflocal(v, t - 1) & \le & (1 + 2 \eps) \sigma + 2 (\sigma + \eps \machines^{-0.2}) + \eps \machines^{-0.2} \\
		& \le & 5 (\sigma + \eps \machines^{-0.2}),
	\end{eqnarray*}
	and similarly
	\[
		\diff(v, t) \le (1 + 2 \eps) \sigma + 2 (\sigma + \eps \machines^{-0.2}) \le 5 (\sigma + \eps \machines^{-0.2}),
	\]
	as desired.
\end{proof}

Now, combining \cref{observation:difflocal-upper-bound-y-diff,claim:first-iteration,claim:change-of-sigma}, it is not hard to show the following.
\begin{lemma}\label{lemma:max-diff-over-phase}
	Let $v$ be a vertex active in iteration $t-1$ in both $\MPCsimul$ and $\LocalRand$. If a phase consists of at most $I := (\log{m}) / (10 \log 10)$ iterations, then it holds $\abs{\yideal_{v, t} - \ylocal_{v, t}} \le \machines^{-0.1}$ and $\diff(v, t) \le \machines^{-0.1}$ w.h.p.
\end{lemma}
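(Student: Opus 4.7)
The plan is a straightforward induction on the iteration counter, combining the base case given by \cref{claim:first-iteration} with the one-step recurrence from \cref{claim:change-of-sigma}. For a fixed vertex $v$ that is active in both $\LocalRand$ and $\MPCsimul$ throughout the phase, define
\[
	\sigma_t := \max\set{\abs{\yideal_{v, t} - \ylocal_{v, t}},\ \diff(v, t)}.
\]
By \cref{claim:first-iteration}, at the first iteration $\tstar$ of the phase we have $\sigma_{\tstar} \le \machines^{-0.2}$ with high probability (the $\diff$ term is exactly $0$ there). Applying \cref{claim:change-of-sigma} at each subsequent iteration, provided the inductive bound $\sigma_{t-1}$ is maintained, yields $\sigma_t \le 4(\sigma_{t-1} + \eps \machines^{-0.2})$ w.h.p.

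Unrolling this linear recurrence for $k := t - \tstar \le I$ iterations gives
\[
	\sigma_t \le 4^k \sigma_{\tstar} + 4 \eps \machines^{-0.2} \cdot \frac{4^k - 1}{3} \le 2 \cdot 4^I \machines^{-0.2},
\]
using that $\eps$ is a small constant. The choice $I = (\log \machines)/(10 \log 5)$ is calibrated precisely so that $4^I = \machines^{(\log 4)/(10 \log 5)}$; since $\log 4 < \log 5$, the exponent is strictly less than $0.1$, so $4^I \le \machines^{0.1 \cdot \log_5 4}$ and hence
\[
	\sigma_t \le 2 \machines^{0.1 \log_5 4 - 0.2} \le \machines^{-0.1}
\]
for $\machines$ sufficiently large, which holds since $\machines \ge \log^{10}{n}$ in every phase. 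This gives both $\abs{\yideal_{v, t} - \ylocal_{v, t}} \le \machines^{-0.1}$ and $\diff(v, t) \le \machines^{-0.1}$, as desired.

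The only nonroutine issue is handling probabilities. Each application of \cref{claim:first-iteration} and \cref{claim:change-of-sigma} holds with high probability for a fixed vertex at a fixed iteration; we apply them $O(I) \le O(\log n)$ times per vertex and across $n$ vertices, so a union bound over at most $n \log n$ events still preserves the high-probability guarantee (by taking the inverse-polynomial failure probabilities inside those lemmas to be sufficiently small, e.g.\ $n^{-c}$ for $c$ a large enough constant). The main delicacy to double-check is that the inductive hypothesis of \cref{claim:change-of-sigma} requires $v$ to be active in both processes at iteration $t-1$: if $v$ becomes frozen (in either process) at some intermediate iteration, the conclusion we are after is about iterations where it is still active, so the induction is cleanly carried through only along this prefix.
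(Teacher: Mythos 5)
Your proposal is correct and follows essentially the same route as the paper: establish the base case with \cref{claim:first-iteration}, iterate \cref{claim:change-of-sigma}, and use the choice of $I$ to control the blowup. The only cosmetic difference is in how the recurrence $\sigma_t \le 4(\sigma_{t-1}+\eps \machines^{-0.2})$ is closed. The paper absorbs the additive term into the multiplicative one, using that $\sigma_{t}\ge \machines^{-0.2}$ at all times so $4(\sigma_{t-1}+\eps\machines^{-0.2})\le 4(1+\eps)\sigma_{t-1}\le 5\sigma_{t-1}$; this gives $\sigma_{\tstar+I}\le 5^I\machines^{-0.2}$, and $I=(\log\machines)/(10\log 5)$ is calibrated so that $5^I=\machines^{0.1}$ exactly, hence $\machines^{-0.1}$ with no residual constant. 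You instead unroll the linear recurrence exactly via the geometric-series formula, getting $2\cdot 4^I\machines^{-0.2}$; since $4^I=\machines^{0.1\log_5 4}<\machines^{0.1}$, this is a strictly sharper bound in the exponent, but it forces you to absorb a leading constant $2$ by invoking that $\machines\ge\log^{10}n$ is large, which requires $n$ beyond some (rather large) threshold. Both are fine; the paper's $5^I$ bookkeeping is cleaner because it avoids that slack entirely. Your explicit remark about the union bound over iterations and vertices, and about the induction only being valid along the prefix of iterations where $v$ is active in both processes, are both correct and consistent with the (terser) paper proof.
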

\begin{proof}
	Let iteration $\tstar$ be the first iteration of the $i^{th}$ phase. Combining \cref{observation:difflocal-upper-bound-y-diff}, \cref{claim:first-iteration} and \cref{claim:change-of-sigma}, for any $\tstar \le t \le \tstar + I$ in which $v$ is not bad, it holds
	\[
		\abs{\yideal_{v, t} - \ylocal_{v, t}} \le \difflocal(v, t) \le 10^I \machines^{-0.2} \le \machines^{-0.1},
	\]
	and
	\[
		\diff(v, t) \le 10^I \machines^{-0.2} \le \machines^{-0.1}.
	\]
\end{proof}

We are now ready to prove the main result of this section.
\subsubsection{Proof of \cref{lemma:main-lemma-matching}}
\label{sec:proof-of-matching-lemma}
	We start the proof by recalling that \cref{lemma:memory-and-round-complexity} shows the desired bound on the space- and round-complexity of $\MPCsimul$. The rest of the proof is divided into two parts. First, we prove the statement for vertex cover, and then for matching.

	Throughout the proof we consider only those rounds of the MPC algorithm that execute at least two iterations. The rounds in which is executed only one iteration coincide with the ideal algorithm, and for them the claims in the rest of the proof follow directly. In this section, we assume that a maximum matching and a minimum vertex cover is of size at least $\log^{10}{n}$. In \cref{sec:removing-matching-assumption} we show how to handle that case when the maximum matching has size less than $\log^{10}{n}$.

	\paragraph{Part I --- Vertex Cover}

		Let $\tC$ be the vertex cover constructed by $\MPCsimul$. First, observe that $\tC$ is indeed a vertex cover as by the end of the algorithm every edge is incident to at least one frozen vertex, and every frozen vertex is included in $\tC$. This follows as: by the end of the algorithm the weight of active edges is at least $1 - 2\eps$; and, the last iterations of $\MPCsimul$ directly simulate $\LocalRand$. Since $\LocalRand$ freezes any vertex (and its incident edges) having incident edge of weight at least $1 - 2\eps$, $\MPCsimul$ freezes such vertices as well.
		
		Informally, our goal is to show that $|\tC|$ is roughly at most twice larger than $W_M := \sum_{v \in V'} \yMPC_v$, where $V'$ is the set of vertices after removing those of weight more than $1$. Our proof consists of two main parts. First, we consider the contribution to $W_M$ of the vertices that remained active in $\LocalRand$ for at least as many iterations in $\MPCsimul$ (and at first we ignore the other vertices). After that, we take into account the remaining vertices, and in the same time account for the vertices having weight more than $1$.
		
	\paragraph{$\LocalRand$ freezing last}
		Let $t$ be the last iteration of a phase. We first consider only those vertices added to $\tC$ which remained active in $\LocalRand$ for at least as many iterations as in $\MPCsimul$, and claim that for every such vertex $v$ it holds $\yMPC_v \ge 1 - 5 \eps$. In the analysis we give, we ignore that some vertices $u$ such that $\yMPC_u > 1$ got removed along with their incident edges. We analyze two types of vertices: good vertices; and, bad vertices that got frozen by $\MPCsimul$ first.
		\begin{itemize}
			\item[(1)] If $v$ is good, then it was active in $\MPCsimul$ in the same iterations as in $\LocalRand$. Hence, by \cref{lemma:max-diff-over-phase}, we have that $\yMPC_{v, t} \ge \yideal_{v, t} - \machines^{-0.1} \ge 1 - 4 \eps - \eps = 1 - 5\eps$.
			\item[(2)] Assume that $v$ is bad, but got frozen by $\MPCsimul$ first. Let $t'$ be the iteration $v$ got frozen by $\MPCsimul$. This directly implies that $\ylocal_{v, t} \ge 1 - 4\eps$. Since $v$ was active in the both algorithms in iteration $t' - 1$, by \cref{lemma:max-diff-over-phase} we have $\yideal_{v, t'} \ge 1 - 4 \eps - \machines^{-0.1}$. But now again by \cref{lemma:max-diff-over-phase} we conclude that $\yMPC_{v, t'} \ge 1 - 4\eps - \machines^{-0.1} - \machines^{-0.1} \ge 1 - 5 \eps$.
		\end{itemize}
		Informally (again), this analysis can be stated as: for every vertex of the two considered types which is added to $\tC$, and while disregarding the vertices whose incident edges got removed, there is at least $(1 - 5 \eps) / 2$ weight in $W_M$. The weight is scaled by $2$ as every edge is incident to at most $2$ vertices of $\tC$.

	\paragraph{$\MPCsimul$ freezing last}
		We now consider the vertices that got frozen by $\MPCsimul$ in later iteration that by $\LocalRand$. We call such vertices \emph{late-bad}, and use $\nlate$ to denote their number. Let $C$ denote the vertex cover constructed by $\LocalRand$. Observe that the late-bad vertices are a subset of $C$ --- if vertex is not active in $\LocalRand$ anymore, it means it has been frozen and added to $C$. Late-bad have another important property -- every vertex $v$ such that $\yMPC_v > 1$ is late-bad, as we argue in the sequel. In our next step, we upper-bound $\nlate$ by the number of the vertices of $C$ that are bad.
		
		Let $C_t$ denote the vertices that join the vertex cover $C$ in the $t^{th}$ iteration of $\LocalRand$. From \cref{claim:prob-to-differ} and \cref{lemma:max-diff-over-phase}, a vertex is bad with probability at most $\machines^{-0.1} / \eps$. Hence, the expected number of bad vertices in the $t^{th}$ iteration is at most $\machines^{-0.1} |C_t| / \eps$. Notice that $C_t$ is a deterministic set, defined independently of $\MPCsimul$. Furthermore, at the $t^{th}$ iteration, every vertex of $C_t$ becomes bad independently of other vertices. So, the number of bad, and also heavy-bad, vertices throughout all the phases is with high probability upper-bounded by $O(\max\{\log^2{n}, \machines^{-0.1} |C| / \eps\})$. Recall that we assume $\machines^{-0.1} \le \eps^2$, and also that a minimum vertex cover of the graph has size at least $\log^{10}{n}$. This now implies
	\[
		|\tC| \ge \rb{1 - \frac{\machines^{-0.1}}{\eps}} |C| \ge (1 - \eps) |C|,
	\]
	and hence
	\begin{equation}\label{eq:nlate-upper-bound}
		\nlate \le \eps |C| \le \frac{\eps}{1 - \eps} |\tC|.
	\end{equation}

	\paragraph{Vertices $v$ such that $\yMPC_v > 1$}
		We say that a vertex $v$ is \emph{heavy-bad} if $\yMPC_v > 1$. The analysis we performed above on relating $|\tC|$ and $W_M$ does not take into account heavy-bad vertices. Recall that heavy-bad vertices are removed from the graph along with their incident edges, which we did not account for while lower-bounding $\yMPC_u$ for $u \in \tC$. Next, we discuss how much heavy-bad vertices affect $\yMPC_u$ for any vertex $u \in V' \cap \tC$.
		
		Observe that $v$ is heavy-bad only if: $v$ belongs to some set $C_t$; $v$ was active in the $(t - 1)^{st}$ iteration by $\MPCsimul$; and, $v$ remained active (by $\MPCsimul$) throughout the $t^{th}$ iteration. Hence, every heavy-bad vertex is also late-bad (but there can be a late-bad vertex that is not heavy-bad).
		
	Let $v$ be late-bad. Observe that by the time it holds $\yMPC_v > 1$, vertex $v$ is already bad w.h.p. --- as long as $v$ is not bad from \cref{lemma:max-diff-over-phase} we have $\yMPC_v \le \yideal_v + \machines^{-0.1} \le 1 - \eps + \machines^{-0.1} < 1$. On the other hand, from the iteration $v$ got frozen in $\LocalRand$, along with its incident edges, the increase in $\yMPC_v$ is accounted to $\diff(\cdot, \cdot)$, which we have already analyzed. So, to account for the removal of heavy-bad vertices and their incident edges, it suffices to upper-bound the total weight of heavy-bad vertices while they were still active in $\LocalRand$. That weight is trivially upper-bounded by $\nlate$. Furthermore, the weight $\nlate$ takes into account those vertices that are late-bad but not heavy-bad, so we do not consider separately such vertices (as we did for the other kind of bad vertices and for the good ones).
	
	\paragraph{Finalizing}
		Let $\tClate$ be the subset of $\tC$ consists of late-bad vertices. Our analysis shows
		\[
			\frac{W_M + \nlate}{|\tC| - |\tClate|} \ge \frac{1 - 5 \eps}{2}.
		\]
		For the sake of brevity, define $\alpha := (1 - 5 \eps) / 2$. Using that $|\tClate| \le \nlate$ and upper-bound \cref{eq:nlate-upper-bound}, we derive
		\begin{eqnarray}
			W_M & \ge & \alpha |\tC| - \nlate (1 + \alpha) \nonumber \\
				&\ge & \rb{\alpha - \frac{\eps}{1 - \eps} (1 + \alpha)} |\tC| \label{eq:cT-bound}
		\end{eqnarray}
		Next, observe that $\alpha < 1$ and hence $1 + \alpha < 2$. Also, we assume $\eps < 1/2$. Then, \eqref{eq:cT-bound} further implies
		\[
			W_M \ge \rb{\alpha - 4 \eps} |\tC|,
		\]
		and hence
		\begin{equation}\label{eq:tC-upper-bound}
			|\tC| \le \frac{2}{1 - 13 \eps} W_M \le 2 (1 + 50 \eps) W_M.
		\end{equation}
		Finally, from strong duality it implies that $|\tC| \le 2 (1 + 50 \eps) W_C^{\star}$, where $W_C^{\star}$ is the minimum fractional vertex cover weight. Since the minimum integral vertex cover has size at least $W_C^{\star}$, the lemma follows.

	\paragraph{Part II --- Maximum Matching}	
				
	After we provided an upper-bound for $\tC$ by \cref{eq:tC-upper-bound}, the analysis of the weight of fractional maximum matching our algorithm $\MPCsimul$ designs follows almost directly. First, recall that $W_M$ denotes the weight of the fractional matching $\MPCsimul$ designs. Also recall that $W_M$ does not include the vertices $v$ that got removed due to having $\yMPC_v > 1$. Therefore, by the design of the algorithm, the vertex-weights $\yMPC_v$ satisfy the matching constraint, i.e., $\yMPC_v \le 1$. Furthermore, from \cref{eq:tC-upper-bound} and from the fact $|\tC| \ge W_C^{\star}$ we have
	\[
		W_M \ge \frac{1}{2 (1 + 50 \eps)} |\tC| \ge \frac{1}{2 (1 + 50 \eps)} W_C^{\star}.
	\]
	Now by strong duality, $W_M$ is a $(2 (1 + 50 \eps))$-approximation of fractional maximum matching.

\subsubsection{Finding small matchings and vertex covers}
\label{sec:removing-matching-assumption}
In the proof of \cref{sec:proof-of-matching-lemma} we made an assumption that the maximum matching size is at least $\log^{10}{n}$. If the maximum matching size is less than $\log^{10}{n}$, in this section we show how to find a maximal matching and a $2$-approximate minimum vertex cover in $O(\log \log {n})$ rounds when the memory per machine is $\Theta(n)$.

First, observe that if the size of a minimum vertex cover is $O(\log^{10} n)$, then the underlying graph has $O(n \log^{10}{n})$ edges -- each vertex can cover at most $n$ edges.
If our graph has $O(n \log^{10}{n})$ edges, we apply the result of~\cite{LattanziMSV11} to find a maximal matching of the graph in $O(\log \log{n})$ MPC rounds. Namely, in~\cite{LattanziMSV11} in the proof of Lemma~3.2 it is shown that their algorithm w.h.p.~halves the number of the edges in each MPC round. Hence, after $O(\log \log{n})$ the algorithm will produce some matching, and the induced graph on the unmatched vertices will have $O(n)$ edges. After that, we gather all the edges on one machine and find the remaining matching. The endpoints of this maximal matching give a $2$-approximate vertex cover. We point out that it is crucial that their method outputs a maximal matching, so it is easy to turn it into a $2$-approximate minimum vertex cover.


\section{Integral Matching and Improved Approximation}
\label{sec:randomized-rounding}
	In this section we prove the following theorem.
	\theoremMatchingVC*
	
	Before we provide a proof, recall that \cref{lemma:main-lemma-matching} shows how to construct \emph{a fractional} matching of large size. In the following lemma we show how to round that matching (i.e., to obtain an integral one), while still retaining large size of the fractional matching. This lemma is the main ingredient of the proof of \cref{thm:matching-VC}.
	\begin{lemma}[Randomized rounding]\label{lemma:randomized-rounding}
		Let $G = (V, E)$ be a graph. Let $x\ :\ E \to [0, 1]$ be a fractional matching of $G$, i.e., for each $v \in V$ it holds $\sum_{e \ni v} x_e \le 1$. Let $\tC \subseteq V$ be a set of vertices such that for each $v \in \tC$ it holds $\sum_{e \ni v} x_e \ge 1 - \beta$, for some constant $\beta \le 1/2$. Then, there exists an algorithm that with probability at least $1 - 2 \exp{\rb{-|\tC|/5000}}$ outputs a matching in $G$ of size at least $|\tC| / 50$.
	\end{lemma}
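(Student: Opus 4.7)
The plan is to use independent edge sampling with conflict removal. Concretely, fix a constant $C$ (say $C=7$), and independently for every $e\in E$ place $e$ into a candidate set $S$ with probability $p_e:=x_e/C$. Output
\[
 M\;:=\;\{\,e\in S\ :\ \text{no other edge of }S\text{ shares an endpoint with }e\,\}.
\]
By construction $M$ is a matching, and each vertex's decision depends only on the $x$-values of its incident edges together with the random choices of its two-hop neighborhood, matching the ``local decision'' requirement promised in the introduction.

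For the expectation, first note that
\[
 \sum_{e\in E}x_e \;=\; \tfrac12\sum_v\sum_{e\ni v}x_e \;\ge\; \tfrac12|\tC|(1-\beta)\;\ge\;|\tC|/4,
\]
using $\beta\le 1/2$. For a fixed edge $e=\{u,v\}$, independence of the Bernoulli indicators $Z_{e'}:=\mathbf{1}[e'\in S]$ gives
\[
 \Pr[e\in M]\;=\;p_e\prod_{e'\ne e,\,e'\cap e\ne\emptyset}(1-p_{e'})\;\ge\;p_e\Bigl(1-\sum_{e'\cap e\ne\emptyset,\,e'\ne e}p_{e'}\Bigr)\;\ge\;p_e\bigl(1-\tfrac{2}{C}\bigr),
\]
since $\sum_{e'\ni u}x_{e'}+\sum_{e'\ni v}x_{e'}\le 2$. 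Summing over $e$ gives $\E{|M|}\ge (1-2/C)|\tC|/(4C)\ge |\tC|/40$. Setting $Y_v:=\mathbf{1}[v\text{ is saturated by }M]$ for $v\in\tC$, applying the same calculation only to edges incident to $v$ yields $\E{Y_v}\ge(1-2/C)(1-\beta)/C\ge 1/20$; hence $|M|\ge\tfrac12\sum_{v\in\tC}Y_v$ and $\E{\sum_{v\in\tC}Y_v}\ge|\tC|/20$, comfortably above the target of $|\tC|/25$ required for a matching of size $|\tC|/50$.

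The main obstacle will be upgrading this expectation bound to a lower-tail concentration of $\sum_{v\in\tC}Y_v$ at rate $\exp(-\Omega(|\tC|))$. The $Y_v$'s are not independent, and a direct application of McDiarmid's bounded-differences inequality to the $(Z_e)_e$ fails: flipping one $Z_e$ from $0$ to $1$ can in principle delete all previously sampled conflicting edges and thereby change $|M|$ by as much as the degree of $e$'s endpoints. My plan is to invoke a concentration tool that tolerates rare high-influence coordinates, such as Talagrand's inequality or Warnke's typical-bounded-differences inequality, exploiting the fact that the bad event ``many neighbors of a single edge are simultaneously sampled'' is itself exponentially unlikely (since $\sum_{e\ni v}p_e\le 1/C$, standard Chernoff implies that every vertex has $O(1)$ sampled incident edges with probability $1-\exp(-\Omega(|\tC|))$). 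Conditioned on this high-probability event, the function $|M|$ is $O(1)$-Lipschitz in Hamming distance in the $(Z_e)_e$, so a straightforward McDiarmid/Chernoff argument then yields the desired lower tail.
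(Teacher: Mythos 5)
Your approach is genuinely different from the paper's: you sample \emph{edges} independently and keep the isolated ones, whereas the paper has each vertex $v \in \tC$ independently propose at most one incident edge (the random variable $X_v$ taking values in $N(v)\cup\{\star\}$) and then keeps the proposed edges that are ``good,'' i.e., isolated among the proposals. Your expectation calculation is fine. However, the concentration step, which you rightly flag as the crux, has a gap, and your diagnosis of where the difficulty lies is itself off.

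First, the difficulty is \emph{not} an unbounded Lipschitz constant. For your static definition $M = \{e\in S : \text{no other edge of }S\text{ shares an endpoint with }e\}$, flipping a single $Z_{e_0}$ with $e_0=\{u,v\}$ changes $|M|$ by at most $2$: among edges of $S\setminus\{e_0\}$ incident to $u$, at most one can be isolated (any two of them conflict at $u$), and likewise at $v$, so adding $e_0$ can kick out at most two edges and add at most one ($e_0$ itself), and symmetrically for removal. The same bound holds for $\sum_{v\in\tC}Y_v$ up to a factor $2$. So the bounded-differences hypothesis of McDiarmid is satisfied with $c=O(1)$; there is no ``rare high-influence coordinate'' problem, and the conditioning you propose buys you nothing. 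The actual obstruction is the \emph{number of coordinates}: you have one independent Bernoulli per edge, so McDiarmid gives a tail of the form $\exp\bigl(-\Theta(\delta^2/|E|)\bigr)$, and with $\delta=\Theta(|\tC|)$ this is $\exp\bigl(-\Theta(|\tC|^2/|E|)\bigr)$, which is vacuous when $|E|\gg|\tC|$ --- exactly the regime the lemma is needed for (e.g.\ $|\tC|=\Theta(\log^{9}n)$ but $|E|=\Theta(n)$). Your proposed escapes are not worked out and do not obviously apply: Talagrand's inequality in its standard form requires a certificate built from coordinates taking ``large'' values, but membership of an edge in $M$ requires certifying both $Z_e=1$ and $Z_{e'}=0$ for all conflicting $e'$, which is not certifiable in that sense; and Warnke's typical bounded-differences inequality still pays for $|E|$ coordinates once you leave the typical event.

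The paper's vertex-proposal device is precisely engineered to avoid this: there are only $|\tC|$ independent coordinates $(X_v)_{v\in\tC}$, changing one $X_v$ changes the good-edge count by at most $2$, and McDiarmid with $k=|\tC|$ and $c=2$ immediately yields $\exp\bigl(-\Theta(|\tC|)\bigr)$ as required. If you want to salvage the edge-sampling route, you would need a concentration inequality whose denominator scales with (something like) $\E[|M|]$ rather than $|E|$, and you would have to establish the hypotheses of that inequality for this $M$ --- which is a real additional step, not a routine one.
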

	In our proof of this lemma we use McDiarmid's inequality that we review first.
	\begin{theorem}[McDiarmid's inequality]\label{theorem:McDiarmid}
		Suppose that $X_1, \ldots, X_k$ are independent random variables and assume that $f$ is a function that satisfies
		\[
			\sup_{x_1, \ldots, x_k, tx_i} |f(x_1, \ldots, x_k) - f(x_1, \ldots, x_{i - 1}, \tx_i, x_{i + 1}, \ldots, x_k)| \le c, \text{ for all } 1 \le i \le k.
		\]
		(The inequality above states that if one coordinate of the function is changed, then the value of the function changes by at most $c$.)
		
		Then, for any $\delta > 0$ it holds
		\[
			\Prob{|f(X_1, \ldots, X_k) - \E{f(X_1, \ldots, X_k)}| \ge \delta} \le 2 \exp{\rb{-\frac{2 \delta^2}{k c^2}}}.
		\]
	\end{theorem}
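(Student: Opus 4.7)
The plan is to derive McDiarmid's inequality by applying the Azuma--Hoeffding martingale concentration inequality to the Doob martingale associated with $f(X_1, \ldots, X_k)$. First I would define, for $i = 0, 1, \ldots, k$, the conditional expectations
\[
    Z_i \eqdef \E{f(X_1, \ldots, X_k) \mid X_1, \ldots, X_i},
\]
so that $Z_0 = \E{f(X_1, \ldots, X_k)}$ and $Z_k = f(X_1, \ldots, X_k)$. The sequence $(Z_i)_{i=0}^k$ is a martingale with respect to the filtration generated by $(X_1, \ldots, X_i)$, and I will set $D_i = Z_i - Z_{i-1}$ for the martingale differences. Proving the theorem amounts to controlling $|Z_k - Z_0|$.

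The key step is to use the bounded-differences hypothesis on $f$ to show that each $D_i$ lies in an interval of length at most $c$ (conditional on the past). Concretely, define
\[
    A_i \eqdef \inf_{x} \E{f(X_1, \ldots, X_k) \mid X_1, \ldots, X_{i-1}, X_i = x} - Z_{i-1},
\]
and
\[
    B_i \eqdef \sup_{x} \E{f(X_1, \ldots, X_k) \mid X_1, \ldots, X_{i-1}, X_i = x} - Z_{i-1}.
\]
Since altering only the $i$-th coordinate changes $f$ by at most $c$, averaging over $X_{i+1}, \ldots, X_k$ preserves this bound, yielding $B_i - A_i \le c$. Moreover, $A_i \le D_i \le B_i$ almost surely, and conditional on $X_1, \ldots, X_{i-1}$ the mean of $D_i$ is zero.

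Given this, I would invoke Hoeffding's lemma in its conditional form: if a random variable $W$ satisfies $a \le W \le b$ and $\E{W} = 0$, then $\E{e^{sW}} \le \exp(s^2(b-a)^2/8)$. Applied to $D_i$ conditionally on $(X_1, \ldots, X_{i-1})$, this gives
\[
    \E{e^{s D_i} \mid X_1, \ldots, X_{i-1}} \le \exp\rb{s^2 c^2 / 8}.
\]
Iterating by the tower property yields $\E{e^{s(Z_k - Z_0)}} \le \exp(s^2 k c^2 / 8)$. A standard Chernoff argument with $s = 4\delta/(kc^2)$ produces $\Prob{Z_k - Z_0 \ge \delta} \le \exp(-2\delta^2/(kc^2))$, and applying the same bound to $-f$ and combining via a union bound produces the two-sided statement in the theorem.

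The main obstacle is the Hoeffding-lemma step: the bounded-differences assumption gives us control over $f$ itself, but we need control over conditional expectations of $f$, so the argument establishing $B_i - A_i \le c$ must carefully exploit the independence of the $X_j$'s (otherwise integrating over the later coordinates could introduce dependence on $x$ in ways not bounded by $c$). Once that step is secured, the remainder is the well-known Azuma--Hoeffding computation.
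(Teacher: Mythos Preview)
The paper does not prove this theorem; it merely states McDiarmid's inequality as a known tool (introduced with ``we use McDiarmid's inequality that we review first'') and then applies it in the proof of \cref{lemma:randomized-rounding}. Your proposal is the standard and correct derivation via the Doob martingale and Azuma--Hoeffding, so there is nothing to compare against --- your argument simply supplies a proof where the paper offers none.
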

	
	\begin{proof}[Proof of \cref{lemma:randomized-rounding}]
		Our goal is to apply \cref{theorem:McDiarmid} in order to prove this lemma. So we will design a randomized process that will correspond to the setup of the theorem, but also round the fractional matching $x$.
		
		\paragraph{Setup and the rounding algorithm}
			For every vertex $v \in \tC$ we define a random variable $X_v$ as follows. $X_v$ takes value from the set $\{N(v) \cup \{\spec\}\}$. So, $X_v$ is either a neighbor of $v$ or a special symbol $\spec$. Intuitively, $X_v$ will correspond to $v$ (randomly) choosing some neighbor, and if $X_v$ equals $\spec$, then it would mean $v$ have not chosen any of the neighbors. The probability space for each $X_v$ is defined as follows: for every $u \in N(v)$, we define $\Prob{X_v = u} = x_{\{u, v\}} / 10$, and $\Prob{X_v = \spec} = 1 - (\sum_{e \ni v} x_e) / 10$. Observe that $\Prob{X_v = \spec} \ge 9/10$. For any two vertices $u, v \in \tC$, the random variables $X_u$ and $X_v$ are chosen independently.
		
		Now we define function $f$. First, given a set of edges $H$ we say that edge $e \in H$ is \emph{good} if $H \setminus \{e\}$ does not contain edge incident to $e$. For a set of variables $\{X_v\}_{v \in \tC}$ we construct a set of edges $H_X$ as follows: if $X_v \neq \spec$ we add edge $\{v, X_v\}$ to $H_X$; otherwise $X_v$ does not contribute to $H_X$. Let $\{v_1, \ldots, v_{|\tC|}\}$ be the vertices of $\tC$. We set $f(X_{v_1}, \ldots, X_{v_{|\tC|}})$ to be the number of good edges in $H_X$.
		
		The number of good edges obtained in this random process represent our rounded matching. Next, we lower-bound the size of the integral matching obtained in this way. To that end, we derive the upper-bound on $c$ for $f$ as defined in \cref{theorem:McDiarmid} and lower-bound the expectation of $f$.
		
		\paragraph{Upper-bound on $c$}
			Fix a vertex $v \in \tC$. If $X_v = \spec$, then $X_v$ does not contribute any edge to $H_X$. If $X_v$ would change to some neighbor of $v$, then it would result in adding edge $e = \{X_v, v\}$ to $H_X$. But now, if there were good edges incident to $v$ or $X_v$, they will not be good anymore. So, changing $X_v$ from $\spec$ to a neighbor of $v$ could increase $f$ by at most $2$. On the other hand, if there was no edge incident to $\{X_v, v\}$ in $H_X$, then changing $X_v$ in the described way would increase $f$ by $1$.
			
				Assume now that $X_v \neq \spec$. Then, similarly to the analysis above, changing $X_v = u$ to another neighbor $u'$ of $v$ could increase $f$ by $2$ at most if $u$ initially had two incident edges while $u'$ had none, so by changing $X_v$ to $u'$ there are two more good edges. In the opposite way, the number of good edges could be decreased by $2$ at most. Finally, changing $X_v$ to $\spec$ could increase $f$ by at most $2$ or decrease by at most $1$.
				
				From this case analysis, we conclude $c = 2$.
				
		\paragraph{Lower-bounding the expectation of $f$}
			Consider an edge $e = \{u, v\}$ incident to a vertex $v \in \tC$. Now we will analyze when $\{X_v = u, v\}$ is a good edge. If $X_u = \spec$, and for every neighbor $w \in N(v) \cap \tC$ we have $X_w \neq v$, the variable $X_v = u$ will contribute $1$ to $f$. First, $\Prob{X_u = \spec} \ge 9 / 10$. On the other hand
			\begin{equation}\label{eq:X_w-for-all-bound}
				\Prob{X_w \neq v \text{ for all } w \in N(v) \cap \tC} = \prod_{e \ni v} \rb{1 - \frac{x_e}{10}} \ge \exp{\rb{-\sum_{e \ni v} \frac{x_e}{10} - \sum_{e \ni v} \frac{x_e^2}{100}}},
			\end{equation}
			where we used inequality $- \ln{(1 - y)} \le y + y^2$ that holds for $|y| \le 1/2$. Now using $y^2 \le y$ for $0 \le y < 1$ and $\sum_{e \ni v} x_e \le 1$, from \cref{eq:X_w-for-all-bound} we further have
			\[
				\Prob{X_w \neq v \text{ for all } w \in N(v) \cap \tC} \ge \exp{\rb{- \frac{11}{100} \sum_{e \ni v} x_e}} \ge \exp{\rb{- \frac{11}{100}}} \ge \frac{89}{100},
			\]
			where the last inequality follows from $1 - y \le \exp{\rb{-y}}$.
			
			So, $X_v = u$ contributes $1$ to $f$ with probability at least $\tfrac{x_{\{v, u\}}}{10} \cdot 9/10 \cdot 89/100 \ge 4 x_{\{v, u\}} / 5$. Since for each vertex $v \in \tC$ it holds $\sum_{e \ni v} \tfrac{x_e}{10} \ge \tfrac{1 - \beta}{10}$, and $\beta \le 1/2$, from linearity of expectation we get
			\begin{equation}\label{eq:exp-f-bound}
				\E{f(X_{v_1}, \ldots, X_{v_{|\tC|}})} \ge 4 |\tC| (1 - \beta) / 50 \ge |\tC| / 25.
			\end{equation}
			
		\paragraph{Applying \cref{theorem:McDiarmid}}
			We are now ready to conclude the proof. Let $\delta = |\tC| / 50$. By applying \cref{theorem:McDiarmid} to the function $f$ and the random variables we defined, using that $c=2$ and the lower-bound \cref{eq:exp-f-bound} on the expectation of $f$, we conclude that $f(X_{v_1}, \ldots, X_{v_{|\tC|}}) \ge |\tC| / 50$ with probability at least $1 - 2 \exp{\rb{-|\tC|/5000}}$.
	\end{proof}
	We are now ready to prove the main theorem.
	
	\theoremMatchingVC*
	\begin{proof}
		Invoking \cref{lemma:main-lemma-matching} for the approximation parameter $\eps / 50$ we obtain the desired approximation of the minimum vertex cover. To obtain a $(2+\eps)$-approximate (integral) maximum matching, we alternatively apply the results of \cref{lemma:main-lemma-matching} and \cref{lemma:randomized-rounding}, as we describe in the sequel. We proceed with the proof as follows. First, we describe how to handle the case when the input graph has small matching. Second, we define an algorithm that iteratively extracts matching of a constant size from our graph. Finally, we analyze the designed algorithm -- the probability of success and the number of required iteration to produce a $(2 + \eps)$-approximate maximum matching.
		
		\paragraph{Small degree} We invoke two methods separately, each of them providing a matching, and we output the larger of them as the final result. The first method is described \cref{sec:removing-matching-assumption}, and performs well when the matching size if $O(\log^{10}{n})$. Hence, from now on we assume that the maximum matching is of size at least $\log^{10}{n}$.
		
		\paragraph{Algorithm}
			Now, define algorithm $\cA$ that as input gets a graph $G = (V, E)$ and consists of the following steps:
		\begin{itemize}
			\item Invoke $\MPCsimul$ to obtain a fractional matching $x$.
			\item Apply the rounding method described by \cref{lemma:randomized-rounding} on $x$. Let $M$ be the produced integral matching.
			\item Update $V$ by removing from it all the vertices in $M$.
		\end{itemize}
	
		\paragraph{Analysis of the algorithm}
			Consider one execution of $\cA$. Let $x$ be the fractional matching returned by $\MPCsimul$ for the approximation parameter set to $\eps / 50$, and let $W(x)$ denote its weight. Let $C$ be the vertex cover as defined in the statement of \cref{lemma:main-lemma-matching}. By \cref{lemma:main-lemma-matching}, and from the fact that $W(x) \le |C|$, there are at least $W(x) / 3$ vertices that have fractional weight at least $1 - 5 \eps$. Hence, as long as $x$ has weight at least $\log^{9}{n}$, the rounding method described by \cref{lemma:randomized-rounding} w.h.p.~produces an integral matching $M$ of size at least $W(x) / 150$.
			
			Consider now multiples executions of $\cA$. Once it holds $W(x) < \log^{9}{n}$, it means that we have already collected a large fraction of any maximal matching, i.e., $(1 - 1/\log{n})$ fraction. On the other hand, as long as $W(x) \ge \log^{9}{n}$ algorithm $\cA$ will produce an integral matching of size at least $1/150$ fraction of the size of the current maximum matching. This discussion motivates our final algorithm which is as follows: run $\cA$ for $\log_{150/149}{(1/ \eps)}$ many iterations and output the union of integral matching it produces. Our discussion implies that the final returned matching is a $(2+\eps)$-approximate maximum matching of the input graph. Furthermore, for constant $\eps$, this algorithm can be implemented in $O(\log \log n)$ MPC-rounds.
	\end{proof}
	
\subsection*{Acknowledgments}
We thank Zeyong Li, Daan Nilis, and anonymous reviewers for their valuable feedback. S.M.~is grateful to his co-authors for the previous collaboration in~\cite{czumaj2017round} that was the starting point of this project. We are also grateful to Christoph Grunau for valuable discussions and for pointing out an imprecision in the previous proof of \cref{claim:change-of-sigma}.
R.R.~was supported by NSF award numbers CCF-1650733, CCF-1733808, CCF-1740751, IIS-1741137 and Israel Science Foundation Grant 1147/09.
Most of the work on this paper has been carried out while C.K.~was at the University of Warwick, where he was supported by the Centre for Discrete Mathematics and its Applications (DIMAP) and by EPSRC award EP/N011163/1. Part of this work has been carried out while S.M.~was visiting MIT.

\bibliographystyle{alpha}
\bibliography{ref}
\end{document}